\theoremstyle{plain}
\newtheorem{theorem}{Theorem}[section]
\theoremstyle{definition}
\newtheorem{definition}[theorem]{Definition}
\begin{document}
\title{On Games and Computation}
\author{Antti Kuusisto\\
{\small University of Helsinki, Tampere University}\\
{\small }}

\date{}

\maketitle

\begin{abstract}
\noindent
We introduce and investigate a range of general notions of a game. Our
principal notion is based on a set of 
agents modifying a relational structure in a discrete evolution sequence. 
We also introduce and study a variety of ways to 
model partial and erroneous information in the setting.
We discuss the connection of the related general setting to logic
and computation formalisms, with emphasis on the recently introduced
Turing-complete logic based on game-theoretic semantics.
\end{abstract}

\section{Introduction}

We introduce and investigate a range general formalisations of
the notion of a \emph{game}. Games here refer to multiplayer interaction 
systems as conceived in, e.g., the field of multiagent systems. Our main 
formalisation is an iterative setting where the players jointly modify a 
relational structure in a discrete sequence of steps.
The approach is quite general, and generality is indeed 
one of our principal aims.

\medskip

\medskip

\noindent
To gain intuition into the setting,
the relational structures can be considered, e.g., to represent the board of
some board game---chess for example---at different points of time.
The individual pawns and other pieces can then be
naturally modeled by constant symbols or singleton predicates,
for example. The players move the
pieces about, i.e., modify the relational structure.

\medskip

\medskip

\noindent
In the general setting, we put no limitations to what the modifications could be like in a
particular scenario. 
It may be possible to remove domain elements and
introduce new ones to the structures.
Likewise, it may be possible to remove tuples from the relations of the structures
and introduce new ones.
Each game round corresponds intuitively to a new, modified structure.
In any particular modeling scenario, only the game rules restrict the set of
allowed modifications in each round. A function modeling \emph{chance} is
also included into the setting to enable investigations
requiring related features and capacities.

\medskip

\medskip

\noindent
Board games, however, are only a starting point. The setting we
define is intended to provide a very general modeling framework.
The framework aims to offer a wide range of options for
studying different kinds of
interaction scenarios involving a \emph{concrete dynamic
environment} (the changing relational structures) and a 
set of \emph{agents} acting in that environment.
This will then be connected to a very general approch to logic
using a powerful, Turing-complete logic formalism introduced
recently in \cite{tc}. The logic provides a full range of ways to 
formally control the new setting.

\medskip

\medskip

\noindent
Using relational structures as the starting point of our 
formal systems has two principal advantages. Firstly,
relational structures are highly general as well as natural,
being able to model more or less everything in a
flexible way. Secondly, relational structures enable us to
\emph{directly use different logics} to control the 
time evolution and flow of changing structures.

\medskip

\medskip

\noindent
Logic plays a crucial role in our study. We first observe that the
Turing-complete logic $\mathcal{L}$ of \cite{tc} is
intimately connected to our main formalisation of the notion of a game.
Indeed, the evaluation of formulae via the game-theoretic semantics of $\mathcal{L}$ is all
about modifying relational strucures, so $\mathcal{L}$ can be viewed as a
particular game system included in our formal framework of games.
Conversely, we briefly analyse ways to directly simulate formal game 
evolutions of our framework within the setting of $\mathcal{L}$.
Moreover, we discuss further general ways to control  game systems via logic,
including, e.g., ways of representing knowledge of agents and beyond.

\medskip

\medskip

\noindent
In addition to obviously considering perfect information scenarios, we
introduce a simple and natural yet highly general way to deal with partial and
potentially false information.
The approach is based on two maps. The \emph{perception map} provides---based on the current 
relational structure---a
\emph{mental model} that reflects the way an agent sees the actual
current world (i.e., the current relational structure) and other relevant
factors concerning the agent.
The agent then acts, in one way or another, using the mental model to 
decide upon the particular course of action.
The chosen actions can depend on the agent's (possibly limited) reasoning capacities.
All this is captured formally by a \emph{decision map} that takes the mental model as an input
and outputs a specific action. The mental model can be a relational structure, but we also 
consider more elaborate approaches to better account for incomplete information issues.

\medskip

\medskip

\noindent
To supplement our principal notion of a system, we also consider some 
generalizations. For example, we consider ways to abstract away the discrete iteration steps
leading from a structure to another. This gives rise to a potentially continuous flow of
structures. Furthermore, the approach
provides a way to model situations with infinite past, cyclic time, et cetera.

\medskip

\medskip

\noindent
There is of course a vast literature investigating
notions related to our study, especially in the field of multiagent
systems \cite{wool}. The concurrent game models
used in Alternating-time temporal logic \cite{alur} relate to
our notion of a game system, but the main focus is not on relational structures there. In first-order temporal
logics (see, e.g., \cite{hodg} and the 
references therein), however, the setting typically involves a flow of
relational structures.
Formalisms that bear some similarity to the original motivations of 
the logic $\mathcal{L}$, as
given in \cite{tc}, include, e.g., Abstract
State Machines \cite{borg}, but that approach is---unlike $\mathcal{L}$---only
remotely related to our study of multiagent interactions.
The idea that the \emph{general notion} of a
game should be formulated in terms of agents 
jointly \emph{modifying a relational structure} (or \emph{model})
has been stated in \cite{a15k, a15k2}
and formulated in further detail in \cite{a18k}. We 
elaborate on those suggestions, developing an
elaborate notion of a game system and drawing links with logic.
This leads to a framework with a reasonably
flexible capacity to model more or less every
possible dynamic interaction scenario.

\medskip

\medskip

\noindent
Our approach is foundational and thus
we provide relatively detailed discussions of most definitions we give, justifying
the theoretical and formal choices. After the brief technical preliminaries in Section \ref{preliminaries}, we
introduce and discuss formal notions of a system (i.e., notions of a game or interaction framework) in Section \ref{systems}.
In Section \ref{systems and logic} we then draw
connections to logic, especially the Turing-complete logic $\mathcal{L}$,
but also other systems. In addition to considering the connections of $\mathcal{L}$ to games, we
also directly analyse some of the more fundamental properties of the logic.

\section{Preliminaries}\label{preliminaries}

\noindent
The power set of a set $S$ is denoted by $\mathcal{P}(S)$.
For any signature $\sigma$, the empty $\sigma$-structure is
in general allowed. Note that the empty sructure is
not the same object as $\emptyset$. We suppose 
this holds holds even if $\sigma = \emptyset$.

\medskip

\medskip

\noindent
A structure (or model) typically refers to a first-order model as conceived in standard logic. However,
below structures can also be more general objects, such as---to name a few of the
many possibilities---sets or classes of first-order structures; sets of logical
formulas; first-order models with relations having probabilistic weights on
the relation tuples; or pairs $(\mathfrak{B},f)$ where $\mathfrak{B}$ is a
first-order model and $f$ and assignment function mapping some set of
variable symbols into the domain of $\mathfrak{B}$. This generality 
can be advantageous. For example, a set of first-order structures can 
represent a  set of conceived possible worlds, while a reasonable setting for 
modeling quantum phenomena could be to consider sets of first-order models, each 
model having a complex number weight.\footnote{In one simple case, the domain of the 
first-order models in that setting would correspond to space coordinates.}
\emph{However, standard relational first-order 
models are by far the most important notion of a
structure that we consider below, providing background intuitions for most of the
discussed technical as
well as conceptual issues.} However, we
use the word \emph{model} as a synonym for \emph{structure}, and refer to \emph{first-order models}
when it is indeed only standard first-order models that we are considering.

\medskip

\medskip

\noindent
We assume that each structure can be associated with a signature $\sigma$ that 
relates to the objects of that structure. In the paradigmatic case of standard
first-order models, the signature is as defined in standard logic.
We define \emph{relational} first-order models to have a purely relational 
signature, so constant symbols and obviously function symbols are not included.
First-order models are not assumed to be finite by default, as is sometimes the case in
mathematics relating to computation (especially finite model theory).

\section{Systems}\label{systems}

In this section we define a general notion of a system. We 
begin with some preliminary definitions.

\medskip

\medskip

\noindent
Consider a triple $(\sigma, A, I)$, where $\sigma$ is a
signature, $A$ a set of actions and $I$ a set of agents (or agent names).
Let $S$ be a set of $\sigma$-structures.
An $(S,A,I)$-\emph{sequence} is a finite sequence $$(\mathfrak{B}_0,\textbf{a}_0,\mathfrak{B}_1,
\textbf{a}_1,\dots , \mathfrak{B}_{k},\textbf{a}_{k})$$ where $\mathfrak{B}_i\in S$ 
and $\mathbf{a}_i\in A^I$ for each $i\leq k$. We note that also the empty sequence, denoted by $\epsilon$, is
considered an $(S,A,I)$-sequence.

\medskip

\medskip

\begin{definition}\label{firstsystem}
\normalfont
A \emph{system frame base} over $(\sigma,A,I)$ is a pair $(S,F)$ 
such that the following conditions hold:
\begin{enumerate}
\item
$S$ is a set of $\sigma$-structures.
%
%
%
%
\item
$F$ is a function $F:\ T \rightarrow\
\mathcal{P}(S)$, where $T$ is some subset of
the set of all $(S,A,I)$-sequences.\hfill\qedsymbol
\end{enumerate}
\end{definition}

\medskip

\medskip

\noindent
Intuitively, a system frame base consists of a set $S$ of possible worlds
and a function $F$ that (nondeterministically) indicates how finite sequences of possible worlds are allowed to
evolve to longer sequences. The sequences correspond to time evolutions of possible worlds.


\medskip

\medskip

\noindent
In a bit more detail, consider a sequence
$$(\mathfrak{B}_0,\textbf{a}_0,\mathfrak{B}_1,
\textbf{a}_1,\dots , \mathfrak{B}_{k-1},\textbf{a}_{k-1},\mathfrak{B}_{k})$$
of possible worlds $\mathfrak{B}_i$ and (tuples of) actions $\mathbf{a}_i\in A^I$ 
carried out\footnote{The actions in $\textbf{a}_i$ can most naturally be
considered to be carried simultaneously in $\mathfrak{B}_i$. However, interpreting these actions
simultaneous is by no means the only possibility.} in those possible worlds. This sequence ends with the possible 
world $\mathfrak{B}_{k}$ that could be considered the \emph{current possible world}, or the 
current state of affairs. Now, if the tuple of actions $\mathbf{a}_k\in A^I$ is carried out in
the current possible world $\mathfrak{B}_k$, we 
get the extended sequence
$$(\mathfrak{B}_0,\textbf{a}_0,\mathfrak{B}_1,
\textbf{a}_1,\dots , \mathfrak{B}_{k}, \textbf{a}_k).$$
Now the function $F$ gives the set
$$F\bigl((\mathfrak{B}_0,\textbf{a}_0,\mathfrak{B}_1,
\textbf{a}_1,\dots , \mathfrak{B}_{k}, \textbf{a}_k)\bigr)$$
of new possible worlds, and one these worlds will ultimately become the
new current possible world. Note indeed that $F$ does not deterministically give a single
new current possible world, but instead only a set of new candidates.
In the special case where $F$ outputs the empty set, it is
natural to interpret the situation so that
the actions $\textbf{a}_k$ lead to termination of the evolution. 

\medskip

\medskip

\noindent
Note also that the domain of the function $F$ is specified to be a
subset $T$ of the set of all $(S,A,I)$-sequences, with no particular restrictions on $T$.
Thus it can happen that $F$ is defined even on some $(S,A,I)$-sequences that do not
belong to the set $T_F$ of all possible sequences that $F$ gives rise to.\footnote{The set $T_F$ is 
the set of sequences obtained by starting from the empty sequence $\epsilon$ and inductively
generating all possible sequences according to what $F$ outputs.} This feature could of course be
avoided by putting extra conditions on $F$. But, this extra flexibility and generality in the
definition of $F$ can also be beneficial.\footnote{For example, we could define some function $F_r$ 
according to some natural behaviour restriction $r$ and then study what 
kinds of evolutions the function $F_r$ would allow when starting from a sequence $t\not=\emptyset$
such that $t\not\in T_{F_r}$.}

\medskip

\medskip

\noindent
Since $F$ is indeed a partial function on the set of $(S,A,I)$-sequences, there indeed may be cases
where $F$ gives no output. This is subtly different from the case where $F$ outputs
the empty set. Supposing $F$ is
undefined on the input $t  = (\mathfrak{B}_0,\textbf{a}_0,\mathfrak{B}_1,
\textbf{a}_1,\dots , \mathfrak{B}_{k}, \textbf{a}_k)$, we can interpret this to mean, e.g.,
that the tuple $\textbf{a}_k$ contains some \emph{forbidden} actions in 
the possible world $\mathfrak{B}_{k}$ when the history leading to $\mathfrak{B}_k$
is $(\mathfrak{B}_0,\textbf{a}_0,\mathfrak{B}_1,
\textbf{a}_1,\dots , \mathfrak{B}_{k-1}, \textbf{a}_{k-1})$.
If an evolution terminates this way
due a tuple of actions that is not allowed, the situation is indeed subtly different from
termination resulting in from $F$ outputting $\emptyset$ (which corresponds to 
termination via an allowed tuple of actions). Of course---in different scenarios---one could
talk about \emph{possible} or \emph{available} actions rather than allowed and forbidden actions. It all
depends on the background interpretations.

\medskip

\medskip

\noindent
It is often natural to allow \emph{non-actions} in addition to actions.
Then we can define $A$ so that it contains a special symbol (or
perhaps many special symbols) that correspond to taking no action
whatsoever. For example, suppose $A = \{x,y\}$ with $x$ indicating no action
taken and $y$ corresponding to some action. Let $I = \{0,1\}$. Then
the tuples $(x,x),(x,y)$ and $(y,x)$ correspond to situations with non-actions.
If $F$ is undefined, say, on some sequence ending with $(x,x)$, then this can correspond 
for example to a scenario where at least one action in the action tuple is required and
the total non-action tuple $(x,x)$ is simply not allowed or somehow impossible.

%

%

\medskip

\medskip

\noindent
Now, $F$ is indeed nondeterministic in the sense that it only gives a \emph{set} of new 
possible worlds in a frame base $(S,F)$.
Therefore, to decide which one of the new possible
worlds given by $F$ becomes the new current possible world,
we define the notion of a system frame. The key is simply to define a choice function $G$
that picks a new possible world from the set of possibilities given by $F$.

\medskip

\medskip

\begin{definition}
A \emph{system frame} over $(\sigma,A,I)$ is triple $(S,F,G)$ 
such that the following conditions hold:
\begin{enumerate}
\item
$(S,F)$ is a system frame base as defined above.
\item
$G:E\  \rightarrow\  S\cup\{ \mathbf{end} \}$ is a 
function with $E \subseteq T\times \mathcal{P}(S)$ where $T$ is the
set of all $(S,A,I)$-sequences. For all inputs $(t,W)$ where $G$ is defined 
and $G((t,W)) \not=\mathbf{end}$, we require that $G((t,W)) \in W$.
%
%
%
%
%
%
%
\end{enumerate}
\end{definition}

\medskip

\medskip

\noindent
Intuitively, $G$ simply chooses one option from the set $W$ of possible worlds given by $F$, and this choice
depends also on the history $t\in T$. When $G$ outputs $\textbf{end}$, the interpretation
can be that $G$ terminates the evolution of the underlying system. When $G$ is undefined, we
can interpret this for example to indicate that $G$ has no resources to determine the output.
Note also that $G$ is undefined or outputs $\textbf{end}$
always when $F$ outputs $\emptyset$.
This reflects the idea that if evolution is terminated due to $F$, then $G$ complies with this and the
evolution indeed will not continue.

\medskip

\medskip

\noindent
The background intuitions between $F$ and $G$ are different; while $F$ provides a set of 
restrictions on how a system could potentially evolve, $G$ determines, within those restrictions,
how the system then actually evolves. Thus $F$ can be seen as providing the rules how a system 
must evolve, and $G$ is a bit like, e.g., \emph{luck} or \emph{chance} that then 
determines what happens within the allowed constraints. More on the interpretation of $F$ and $G$ (and
beyond) will be given later on.

\medskip

\medskip

\noindent
We are now ready to define the notion of a system. To this end, we
first define that a \emph{structure-ended $(S,A,I)$-sequence} is any
sequence that can be obtained by 
extending an $(S,A,I)$-sequence by some structure in $S$. More formally, a
structure ended $(S,A,I)$-sequence is a  
sequence $$(\mathfrak{B}_0,\textbf{a}_0,\dots , \mathfrak{B}_{k-1},
\textbf{a}_{k-1}, \mathfrak{B}_{k})$$
where $(\mathfrak{B}_0,\textbf{a}_0,\dots , \mathfrak{B}_{k-1},
\textbf{a}_{k-1})$ is an $(S,A,I)$-sequence and $\mathfrak{B}_{k}\in S$ with $k\geq 0$.
We then define the notion of a system. This amounts to adding
\emph{agents} $f_i$ that act (choose actions in $A$) in each current possible world.

\medskip

\medskip

\noindent
\begin{definition}
A \emph{system} over $(\sigma,A,I)$ is a structure $(S,F,G,(f_i)_{i\in I})$ 
defined as follows.
\begin{enumerate}
\item
$(S,F,G)$ is a system
frame as defined above.
\item
Every $f_i$ is a function $f_i:V_i
\rightarrow A$ where $V_i$ is a subset of the 
set of all structure-ended $(S,A,I)$-sequences.\ \ \ \hfill\qedsymbol
\end{enumerate}
\end{definition}

\medskip

\medskip

\noindent
Agents are partial functions on the set of
structure-ended $(S,A,I)$-sequences.
Intuitively, an agent makes choices in models of $S$ based on the
current model $\mathfrak{B}_{k}$
and also the $(S,A,I)$-sequence that gave rise to that model.
If an agent is undefined on some entry, this
can perhaps most naturally be interpreted so that the entry is irrelevant for the underlying study,\footnote{The 
same interpretation for the cases where $F$ or $G$ is undefined is also important. Indeed, one reason
for allowing $F$, $G$ and each $f_i$ to be partial functions is to enable finite (or 
otherwise limited) systems to be defined.} to give one option. If an
agent $f_i$ gets permanently removed from the system for some reason, then it can still be technically
desirable to keep $f_i$ defined on sequences that
extend further to the future in order to enable longer and longer evolutions to be
free of entries where functions have no defined value.\footnote{Some crucial action
tuple $\textbf{a}_i = (f_j( t_i ))_{j\in I}$ (where $t_i$ is a structure-ended $(S,A,I)$-sequence)
can then have all its entries defined even if some agents $j\in I$ are not present in
the last world of $t_i$.}
The removed agent can, for example, systematically output some special non-action
symbol (say, $d\in A$). Similar considerations
can concern agents that have not yet entered the system, or have temporarily left the system.
These can be associated with different symbols (say, $u\in A$ and $\mathit{t}\in A$). An
agent who is present, but chooses not to act, would output some
yet another non-action symbol. Using special
outputs for non-actions has the benefit that we can
indeed distinguish reasons why the agent is inactive.

\medskip

\medskip

\noindent
It is at this stage quite clear that together with $G$, the
agents $f_j$ make systems evolve within the constraints 
given by $F$. The agents act in a possible world $\mathfrak{B}_i$, and then $F$
determines, based on the actions, a set $W$ of potential new possible worlds. 
The actual new possible world is then chosen from $W$ by $G$.

\medskip

\medskip

%
%
%
\noindent
The set of \emph{finite evolutions} of a system $(S,F,G,(f_i)_{i\in I})$ is the 
set that contains all structure-ended $(S,A,I)$-sequences 
$$(\mathfrak{B}_0,\textbf{a}_0, \mathfrak{B}_1,
\textbf{a}_1,\dots ,
\mathfrak{B}_{k-1},\textbf{a}_{k-1},\mathfrak{B}_k)$$
such that $\mathfrak{B}_0 = G(    (\epsilon, F(\epsilon)) )$ and the
following conditions hold for each $i$ such that $0\leq i\leq k-1$:
%
%
%
%
%
%
%
%
%
%
%
%
%
%
%
%
%
\begin{enumerate}
\item
$\mathbf{a}_i = \Bigl(\ f_j\bigl(\, (\mathfrak{B}_0,\textbf{a}_0, \mathfrak{B}_1,
\textbf{a}_1,\dots ,\textbf{a}_{i-1},\mathfrak{B}_{i})\, \bigr)\ \Bigr)_{j\, \in\, I}$
\item
$\mathfrak{B}_{i+1}
= G\bigl(\, \bigl((\mathfrak{B}_0,\mathbf{a}_0,\dots , \mathfrak{B}_i,\mathbf{a}_i),
\ F((\mathfrak{B}_0,\mathbf{a}_0,\dots , \mathfrak{B}_i,\mathbf{a}_i))\, \bigr)\, \bigr)$.
\end{enumerate}

\medskip

\medskip

\noindent
Also the empty sequence is a finite evolution.
\emph{Infinite evolutions} are defined in the analogous way to be infinite sequences
$(\mathfrak{B}_0,\textbf{a}_0,\mathfrak{B}_1,
\textbf{a}_1,\dots )$ of the ordinal length $\omega$ and
satisfying the above conditions $1$ and $2$ with $\mathfrak{B}_0 = G((\epsilon, F(\epsilon)) )$.

\medskip

\medskip

\noindent
If $\mathcal{B} = (S,F,G,(f_i)_{i\in I})$ is a system and $E$ a structure-ended $(S,A,I)$-sequence,
then $(\mathcal{B},E)$ is called an \emph{instance}. If $E$ is also a 
finite evolution of the system, we may call $(\mathcal{B},E)$ a \emph{realizable} instance.
An instance (realizable or not) can also be called a \emph{pointed system} in analogy with
pointed models in modal logic. The last structure $\mathfrak{B}_k$ of $E$ is
called the \emph{current structure} or
\emph{current world} of $(\mathcal{B},E)$ (and also of $E$). The
set $S$ of $\mathcal{B} = (S,F,G,(f_i)_{i\in I})$ is
called the \emph{domain} or \emph{universe} of $\mathcal{B}$
(and also the domain of the system frame base $(S,F)$ and system frame $(S,F,G)$).

\medskip

\medskip

\noindent
Systems (and frames and 
frame bases) where all functions are total are called \emph{strongly regular}. We below
analyse systems, and occasionally ignore 
technically anomalous features arising in systems that are not strongly regular.

\subsection{On interpretations of systems}

\noindent
While there are numerous natural interpretations of systems as defined here, the following rather ambitious
interpretation stands out. A system frame base $(S,F)$ of a
system $(S,F,G,(f_i)_{i\in I})$ can be interpreted to represent the 
\emph{material} or \emph{physical} part of the system, while $G$ and the functions $f_i$ are 
the \emph{non-physical} or \emph{non-material} part.
The functions $f_i$ can indeed be considered to be individual \emph{agents},\footnote{The
functions $f_i$ encode behaviour strategies of agents and the indices in $I$ can be
thought to provide agent names or something of that sort, a unique name (or index) for each agent.
If desired, it is of course possible to
construct a physical counterpart (a body) for an agent and encode it into the structures in $S$.
The body need not necessarily be a connected or somehow local pattern. One natural choice is to
pick a new relation symbol $R_i$ for each agent index $i\in I$ to represent the body. But that is just one 
choice. The related function $f_i$ can in suitable cases be modeled by letting some part of the 
encoding (or body) of the agent encode, e.g., a Turing machine, possibly with some
fault tolerance included. The input to $f_i$ need not be encoded anywhere necessarily.
But if it is, then it is most naturally encoded by some
small, distinguished part of the current structure, suitably local to the body. This means that $f_i$
does not truly necessarily depend on the full sequence ending with the current
structure. Instead $f_i$ depends only on a crude representation of the actual input.
We will discuss these issues in a bit more detail below.} while $G$ can be
regarded as some kind of a high external controller---or perhaps \emph{chance} or \emph{luck}---that
determines the ultimate evolutive
behaviour of the system. The agents pick
actions from the set $A$, and based on the actions, $F$
determines a set of new possible worlds. 
The actual new world is then picked by $G$ from that set.
It is natural to consider $F$ to correspond to \emph{physical constraints}
within which the evolution happens, while $G$ is a more abstract (perhaps 
intuitively non-physical), chance-like entity.\footnote{It is worth noting that
interpretations of systems and the related 
metaphysical issues do not \emph{necessarily} have to be taken in some overtly literal sense. Interpretations 
can also be flexible frameworks that guide thinking in intuitive and fruitful ways. Moreover, it is 
worth remembering that systems also model various frameworks that can appear rather
concrete and even mundane, such as 
concrete games, simple physical systems, computations, et cetera. Nevertheless, the more
literal interpretation attempts are important as they relate to quite fundamental issues.}

\medskip

\medskip

\noindent
Within the collection of various interpretations, it is highly natural to consider 
systems where the tuples $\textbf{a}_i$ of agents' choices are
determined by the current structure $\mathfrak{B}_i$, as
opposed to entire 
sequence $(\mathfrak{B}_0,\textbf{a}_0,\dots ,
\mathfrak{B}_{i-1},\textbf{a}_{i-1},\mathfrak{B}_i)$
ending with $\mathfrak{B}_i$. 
This of course implies that for each $j\in I$, there exists a 
function $h_j$ such that 
$$f_j((\mathfrak{B}_0,\textbf{a}_0,\dots ,
\mathfrak{B}_{i-1},\textbf{a}_{i-1},\mathfrak{B}_i)) = h_j(\mathfrak{B}_i)$$
holds for every $i$.
Note that for each structure $\mathfrak{B}_i\in S$, the
function $f_j$ must be defined either on every
structure-ended sequence ending with $\mathfrak{B}_i$ or none of
such structure-ended sequences.\footnote{This is because the outputs of $f_j$ are
determined by the last structure of each input sequence. Thus also a possible lack of an
output is taken to be so determined.} Thus the domain of $h_j$ is precisely the
structures $\mathfrak{B}_i$ such that $f_j$ is defined on sequences ending with $\mathfrak{B}_i$.

\medskip

\medskip

\noindent
This reflects the idea that evolution histories---at least up to the extent that the
agents can see them---must be
encoded in the current structure, if anywhere. The current
structure could naturally represent, e.g., the physical world at
the current time instance, and the agents' behaviour would then be
assumed to depend only on the current physical world. Indeeed, even the full sequence 
$$(\mathfrak{B}_0,\textbf{a}_0,\dots ,
\mathfrak{B}_{i-1},\textbf{a}_{i-1})$$
can be partially (or even fully, within suitable situations) encoded into the
current world $\mathfrak{B}_i$ of the extended sequence
$$(\mathfrak{B}_0,\textbf{a}_0,\dots ,
\mathfrak{B}_{i-1},\textbf{a}_{i-1},\mathfrak{B}_i).$$
Obviously, different agents $f_j$ can be made to see (i.e., depend on) 
different (typically rather small) parts of that encoding.\footnote{Naturally
agents can also have a limited picture of the current world $\mathfrak{B}_i$. This 
issue will be discussed more later on below.}

\medskip

\medskip

\noindent
Also $F$ can be made dependent upon the last structure only. This is perhaps natural when $F$ is
interpreted to be the part of the physical nature that is not dependent upon chance. Then it may be
natural that all such past time events that are considered to affect $F$, should be readable (and
thus encoded into) the current structure.

\medskip

\medskip

\noindent
In contrast to $f_j$ and $F$, it is typically most natural (but of course optional) to 
keep the behaviour of $G$  
dependent on full input tuples (which are of type $((\mathfrak{B}_0,\mathbf{a}_0,
\dots , \mathfrak{B}_i,\textbf{a}_i),W)$ for $G$). This is natural if $G$ is
interpreted to be some kind of a pure luck factor or something
similar, a high external force or so on. Then it can be reasonable that
the output of $G$ is not readable from the concrete
current physical world but can be arbitrary, which in this case means simply dependence
upon the full history of structures and choices (and the set $W$).\footnote{
The article \cite{a18k} defines systems
according to the intuition that indeed only $G$ depends on full sequences.
We note here that there is an obvious typo in \cite{a18k}. There we should have
\begin{enumerate}
\item
$\mathbf{a}_i = (f_j(\mathfrak{B}_i))_{j\in I}$
\item
$\mathfrak{B}_{i+1}
= G\bigl(\, (\mathfrak{B}_0,\mathbf{a}_0,\dots , \mathfrak{B}_i,\mathbf{a}_i)\, \bigr)$,
\end{enumerate}
while the typo version has the first line $\mathbf{a}_i = (f_i(\mathfrak{B}_i))_{i\in I}$,
which is obviously wrong.
}

\medskip

\medskip

\subsection{Eliminating features\hspace{0.3mm}}

\medskip

\medskip

\noindent
It is worth noting that for conceptual
reasons, it is nice to have both $F$ and $G$ in systems, although the combined
action of $F$ and $G$ is essentially a single partial function. We could 
define systems differently, of course.
It is also worth noting that history features can often be relatively
naturally simulated in 
current structures by using suitable encodings.
This bears a resemblance to, e.g.,
defining tree unravelings in temporal logic, where each
node then fully determines the history of that node.

\medskip

\medskip

\noindent
Furthermore, we can make some of the functions $F$, $G$ and $f_j$
concrete (or perhaps physical) in the sense that
some or all of their features get encoded in the structures $\mathfrak{B}_i$.
Indeed, we already mentioned this possibility in relation to agent functions.
For example, we can indeed encode Turing machines into the
structures in system domains. The Turing machines are then required to fully indicate how
the concretized functions would operate.

\medskip

\medskip

\noindent
Let $(S,F,G,(f_i)_{i\in I})$ be a system and $h\in \{f_i\}_{i\in I}$ a 
concretized function. Suppose that each structure in $S$ encodes $h$
using some distinguished relation symbols $R_{h,j}$. For 
simplicity, suppose $h$ always depends only on
the current structure instead of the full history.
Now, the relations $R_{h,j}$ are required to ``output''
the same choices in each sequence ending with $\mathfrak{B}$ as 
what $h$ would output with the input $\mathfrak{B}$. Of 
course it is natural to make $h$ depend only on some small part of $\mathfrak{B}$, a part
that could be encoded close to where the relations $R_{h,j}$ have tuples. Closeness
here can be measured in relation to some binary distance relation $R$.
This makes
the facts $R_{h,j}(b_1,\dots , b_l)$ (here $b_1,\dots , b_l$ are elements of $\mathfrak{B}$) 
correspond to the material body of the agent $h$. Note that while we assumed $h$
depends only on current structures, we could encode
history features into structures for $h$ to see.

\medskip

\medskip

\noindent
Suppose we encode a concretized agent function $h$ into the model domains,
and suppose we also somehow encode the body of
the related agent. It is then natural (but of course
not necessary) to let
the body of the related agent contain the tuples encoding $h$. It is also natural (but not necessary) to
make the body local, as discussed above. When considering encodings, it is
worth noting that 
tuples of relations (in standard first-order models)
do not have a clear identity that carries from a model to another.
Indeed, if we have  a relation with two tuples, and the model changes so that in the
new model we again have two tuples but now somewhere else in the model, then
there is no obvious way of telling which new
tuple corresponds to which old tuple---if there is any intended correspondence in the first place.
If we wish to encode identities for tuples (in first-order models), one idea is to use ternary relations to encode
binary relations, with the first coordinate providing an indentity for the tuple. For example, a
fact $R(b_1,b_2,b_3)$ would correspond to a tuple encoding the 
pair $(b_2,b_3)$ and having $b_1$ as its indentity.

\medskip

\medskip

\noindent
As we have noted,
perceiving only a part of the current model is natural for agents, and it is
natural if the perceived part is in the vicinity of the material body of the agent. 
Next we discuss issues related to perception, and beyond.

\subsection{Partial and false information}\label{falsepartial}

Generally agents make their choices based on sequences 
$$(\mathfrak{B}_0,\textbf{a}_0,
\dots , \mathfrak{B}_{k-1},\textbf{a}_{k-1},\mathfrak{B}_k).$$
In other words, the functions $f_j$ are functions of such sequences.
The setting where all agents $f_j$ depend on the 
current structure $\mathfrak{B}_k$ only (i.e., the last structure of the
input sequence) will be below referred to as the \emph{positional scenario}. The general 
setting is refferred to as the \emph{general scenario}.

\medskip

\medskip

\noindent
In the general scenario, it is natural that agents $f_j$ do not use the full
sequence $$(\mathfrak{B}_0,\textbf{a}_0,
\dots , \mathfrak{B}_{k-1},\textbf{a}_{k-1},\mathfrak{B}_k)$$ leading to the 
current structure $\mathfrak{B}_k$, but instead some \emph{representation} of
that full sequence. 
Similarly, in the positional scenarion, it is  
natural to assume that the agents only see some \emph{representation} of $\mathfrak{B}_k$.

\medskip

\medskip

\noindent
In both scenarios, the representation may not necessarily resemble the 
represented sequence/structure at all, but could instead be partially or even wholly different.
The intuition of the representation is that it is the \emph{mental model} the
agent has about reality. Let us make this precise.

\medskip

\medskip

\noindent
We first consider the positional scenario. Fix a system $(S,F,G,(f_i)_{i\in I})$. 
While the functions $f_i$ can indeed depend on all of the current model $\mathfrak{B}_k$, which can be
quite reasonable when modeling perfect information games, it is highly natural to define
\emph{perception functions} to cover the scenario of partial and even false information.
Perception functions will make the agent functions $f_i$ depend upon \emph{perceived 
models} or \emph{mental models}. We let a  perception function for agent $i$ to be a
map $p_i:\ S \rightarrow S_i$, where $S_i$ is a class of
structures whose signature may be different from those in $S$. The class $S_i$ is
the class of mental models of agent $i$. We then
dictate that $f_i(\mathfrak{B}) = d_i(p_i(\mathfrak{B}))$
for each input $\mathfrak{B}\in S$, where $d_i:S_i\rightarrow A$ is
called the \emph{decision function} of agent $i$, and $A$ is simply the set of actions of the system we
are considering.

\medskip

\medskip

\noindent
For a concrete example, $p_i$ 
could be a first-order reduction, more or less in the sense of model 
theory or descriptive complexity, giving a very crude, finite approximation of the original model
(which is the input to $p_i$).\footnote{Here the output of $p_i$ approximates the 
current model. We note that it is often sensible to let each output of $p_i$ be intended to
approximate the current instance $(\mathcal{B},E)$, or even more, with nested beliefs, intentions of
agents, and so on.}
Now, even if the input model to $p_i$ is infinite, the output model can be finite and depend only on
some small part of the input model.\footnote{That part could indeed quite naturally be
mostly in the vicinity of the encoded body of the agent.}
Note that parts of the agents' epistemic states can be encoded into the original models
in $S$. Thus the agents can try to take into
account those parts of the other agents' epistemic states that they believe to have access to. How
much agent $i$ knows about the other agents' epistemic states in $\mathfrak{B}\in S$
will be reflected in the structure of the mental model $p_i(\mathfrak{B})$.\footnote{The 
mental model can reflect the agent's beliefs about the other agents' mental states, and the agent's beliefs about beliefs
about beliefs, and so on, possibly in a way that includes all agent-mixed nested modalities. But, of course, a
mental model does not have to try to do too much. Concerning modalites about nested beliefs, it is 
typically unrealistic to have everything in the mental model. Indeed, concerning information in 
general, it is very much realistic to have somehow
strongly partial (and perhaps false) information in the mental model. This relates directly to, e.g., 
limited memory capacities as well as limited perception.}
But of course this information can be highly partial, even false,
and obviously each agent tends to see different parts of 
information of the other agents' epistemic states. Of course agents do not
even have to know the full set of agents operating in the framework.

\medskip

\medskip

\noindent
The case in the general scenario is very similar and analogous to the positional scenario.
Many conceptual issues are more or less the same to a large extent. The 
difference in the formalism is that now $p_i$ maps from 
the set of structure-ended sequences of the original system into the set $S_i$ of
mental models of the agent $i$. The mental model can, in both the general and
positional setting, encode how much the agent $i$ remembers and understands about the
sequence that has lead to the current model in $S$. In the general scenario, however,
the mental model that $p_i$ outputs indeed formally depends on full input sequences, while
in the positional scenario, the sequence leading to the
current model is available only to the (possibly nonexisting) extent that 
the sequence is encoded in the current model.

\medskip

\medskip

\noindent
Different agents $i$ can of
course have different sets $S_i$.
But, in general, what should the mental models in the sets $S_i$ look like?
One option is that they encode sets of models in $S$. Such a set 
corresponds to the models in $S$ that the agent considers possible.\footnote{A related possibility would be to let a mental model encode a 
full set of instances $(\mathcal{B},E)$ the agent considers possible.}
This is a very classical approach.
It is completely unrealistic in many scenarios, as the agent would simply have too 
much information. Furthermore, it requires that all the models that the agent consider 
possible are actually models in $S$.

\medskip

\medskip

\noindent
A somewhat more realistic scenario goes as follows. A mental model in $S_i$ is
simply a set $\mathcal{A}$ of axioms in some logic. Intuitively, it axiomatizes
what the actual current model (and the history leading to it) 
should look like. It can also describe what the full global system (including possible futures,
the other agents and their mental models, the location  of the current
model, et cetera) looks like.\footnote{The picture of reality is indeed typically highly partial.}
We here concentrate mainly on how well the current model is known.
The set $\mathcal{A}$ could now contain the following.

\begin{enumerate}
\item
A set $\mathcal{F}$ of \emph{facts}.\footnote{We note that facts do 
not have to be true in any sense. Perhaps \emph{atoms} would be a
better term, or \emph{assumptions}.} These are atoms $R(b_1,\dots , b_k)$.
The elements $b_1,\dots , b_k$ are taken from some set $B'$ (which does not
have to be the domain of any model in $S$). Intuitively,
the agent could regard $b_1,\dots , b_k$ to be domain elements of the actual 
current model (which formally is the model $\mathfrak{B}$
such that $p_i(\mathfrak{B}) = \mathcal{A}$). The relation symbol $R$ can intuitively
belong to the signature of the models in $S_i$. 
Thus $R(b_1,\dots , b_k)$ could be
for example the fact $\tt{TallerThan}(\tt{John},\tt{Jack})$ representing the 
agent's belief that $\tt{John}$ is taller than $\tt{Jack}$ in the current actual
model\footnote{
$\tt{Jack}$ and $\tt{John}$ are
both elements of $B'$ (but need not really be anything in $\mathfrak{B}$, although it is
natural if they are). It is worth noting that generally the elements in $B'$ can be
differentiated---if desired---from the possible constant symbols in the signature of mental models.
For example, one may wish to keep the elements in $B'$ identical to supposed actual elements, 
while constant symbols are simply names of supposed actual elements.
We note that $\tt{Jack}$ and $\tt{John}$ 
here are not meant to be agents (although they could  possibly be). Instead, they are simply what the agent $i$
considers to be elements of $\mathfrak{B}$.} $\mathfrak{B}$.

\medskip

The relation $R$ can be something the agent \emph{considers} to somehow be an
actual relation in $\mathfrak{B}$, but $R$ can also be
some relation internal to the thinking of the agent. In that case also the 
elements $b_1,\dots , b_k$ can perhaps represent something that the agent
does not consider belonging to $\mathfrak{B}$. Indeed, such a virtual or
purely mental category of facts can be
very important. It could be desirable to include,
e.g., beliefs about 
other agents' beliefs into mental models.
This will involve encoding related issues into facts in $\mathcal{F}$.
\item
A set $\mathcal{F}'$ of negative facts. These are fully analogous to facts in $\mathcal{F}$, but
represent beliefs that the agent \emph{thinks} false. Formally,
these are literals $\neg R(b_1,\dots, b_k)$,
where $R(b_1, \dots, b_k)$ is as described above. Note that there is no problem if the 
agent holds a fact in $\mathcal{F}$ and its negation in $\mathcal{F}'$. Then the agent simply
has contradictory beliefs. It may be difficult for the agent to detect the contradiction.
\item
A set $\mathcal{B}$ of other axioms.\footnote{Not to be confused 
with the set $\mathcal{B}$ of an instance $(\mathcal{B},E)$.}
These are, in the most obvious cases, statements that the agent thinks 
the actual current model satisfies. They could also be statements about more abstract issues that are
not (necessarily) directly related to the current model,
for example statements about the beliefs of other agents. The \emph{only} difference between these and
the facts and negative facts in $\mathcal{F}\cup\mathcal{F}'$ is that these need not be literals.
These non-literals can still, of course, make use of the elements in $B'$, if desired. Again the agent
can have contradictory beliefs, as some subset of $\mathcal{B}$ can have a contradiction as a 
logical consequence. It could simply be difficult for the agent to deduce that contradiction. Or even, it is
possible that the agent later on does easily deduce that contradiction, but at this
stage of evolution, the agent has not yet been able to obtain the contradiction.
Such a situation occurs even in mathematical proofs; we typically do not immediately obtain a 
contradiction, but it takes some effort.
\end{enumerate}

\medskip

\medskip

\noindent
To give an example of the above scenario, let the system domain $S$ consist of
first-order models. Let the set $B'$ be the union of the domains of 
the models in $S$. Suppose the
current model $\mathfrak{B}\in S$ consists of a domain $\{a,b\}$ and a relation $R = \{(a,a),(a,b)\}$.
Let the mental model $p_i(\mathfrak{B})$ be
given by
$$\mathcal{F} = \{ R(a,a) \},\ \mathcal{F}' = \{\neg R(b,a)\},\ \text{ and }\ 
\mathcal{B} = \{\neg \exists^{\geq 8}x(x=x)\}.$$
We are here discussing a scenario where the mental model simply tries to
identify $\mathfrak{B}$ to the best possible extent.
The agent knows that $R(a,a)$ and $\neg R(b,a)$
as well as $\neg \exists^{\geq 8}x(x=x)$
hold, but the agent has no idea about whether---for example---the fact $R(b,b)$ holds or
whether there are more than two elements.
The agent knows, we suppose in this scenario, that the
actual model is one of the models in the set of $\{R\}$-models  
that satisfy $\mathcal{F}\cup\mathcal{F}'\cup\mathcal{B}$
and have domain $D$ such that $\{a,b\}\subseteq D\subseteq B'$.\footnote{If $\mathfrak{M}$ is
an $\{R\}$-model \emph{and} has $a$ and $b$ as domain elements, then  we define
that $\mathfrak{M}\models\mathcal{F}\cup\mathcal{F}'\cup\mathcal{B}$ if the
expansion $\mathfrak{N}$ of $\mathfrak{M}$ with constant symbols $a$ and $b$
(interpreted such that $a^\mathfrak{N} = a$ and $b^\mathfrak{N} = b$)
satisfies all formulae in $\mathcal{F}\cup\mathcal{F}'\cup\mathcal{B}$. (Note that $\neg R(a,b)$ already implies
that there must be two elements at least ($a$ and $b$ are different elements), and note indeed that
we do not even interpret these formulae on models without $a$ and $b$ in the domain. Of course one
could avoid all this, if desired, and work only with the usual conventions concerning constant symbols.)}
Thus the setting resembles open world querying.
Now, to fully know the model $\mathfrak{B}$, the mental model could be given by
\begin{multline*}
\mathcal{F} = \{ R(a,a), R(a,b) \},\ \mathcal{F}' = \{\ 
\neg R(b,a), \neg R(b,b)\ \}\\ \text{ and }\mathcal{B} = \{\exists^{=2} x   (x=x)\}.
\end{multline*}
Note that here we give the full relational diagram of $\mathfrak{B}$ \emph{and} specify that
there are no more elements than those mentioned in the diagram. This suffices to 
fully specify the model in this case.\footnote{Here we did not include atoms $a=a$ in the diagram, 
but of course one would generally have to include them to always be able to tell what the
domain is when looking at the full diagram.
When diagrams indeed mean sets of literals where the constants in the literals 
are domain elements, we can specify models fully with 
suitable diagram notions, not only up to isomorphism, if we so wish for one reason or another.
But there is nothing technically deep behind this, 
and different conventions are possible for
different ways of modeling.}

\medskip

\medskip

\noindent
Now, the agent $i$ must choose an action based on the mental model $p_i(\mathfrak{B})$.
This is done via a function $d_i:S_i\rightarrow A$ that maps mental models to actions in $A$.
Now, a typical agent has limited reasoning resources, not being logically omniscient.
Indeed, as we have discussed, it could even in some cases be difficult for the agent to deduce a 
contradiction from a fact in $\mathcal{F}$ and its negation in $\mathcal{F}'$. This is
even typical if $\mathcal{F}$ and $\mathcal{F}'$ are large (physical) look-up tables.
And deducing a contradiction from a contradictory set $\mathcal{B}$ is 
likewise not always straightforward.

\medskip

\medskip

\noindent
One natural way to model $d_i$ is to use the limited reasoning capacities 
described in \cite{a17k2}. The idea is
that the agent uses logical reasoning,
but has access only to a possibly too small collection of inference rules and may also have to truncate
reasoning patterns after quite short reasoning chains. The premises consist of
the set $\mathcal{F}\cup\mathcal{F}'\cup\mathcal{B}$. It is natural for example to
impose a fixed limit $n$ dictating how many times the agent is allowed to use the inference
rules. Also, it is natural to put similar limitations onto the set of formulae the
agent can know at any time. So, if the agent reasons starting from 
$\mathcal{F}\cup\mathcal{F}'\cup\mathcal{B}$, the agent cannot add new formulae into
the setting without a limit when reasoning. The
agent may have to throw some formulae away during the reasoning process. While this 
mainly models finite memory capacities, note, however, that of course the agent could use
external look-up tables to store information. But those could, on the other hand, become
large and slow to read.
Anyway, in an ideal case, the agent can
deduce the full structure of the current model $\mathfrak{B}$ 
based on the mental model, and perhaps even the 
full history leading to $\mathfrak{B}$, and beyond, all the way to the global 
features of the system. If the agent $i$ can always deduce the full history,
then $f_i$ can depend on full histories.

\medskip

\medskip

\noindent
It is obviously dependent upon the agent (and even the current instance)  what 
reasoning tools can be used, and how complex reasoning patterns are allowed. Concerning
reasoning tools, it is reasonable to add inference rules to the set 
$\mathcal{F}\cup\mathcal{F}'\cup\mathcal{B}$. An additional set $\mathcal{I}$ could be used.
There should be ways to modify the set $\mathcal{I}$ based on the current world and the history.
Such ways can be encoded into the function $p_i$ that produces the mental models. A
later mental model is typically dependent upon an earlier one, e.g., $p_i(\mathfrak{B}_{j+1})$ upon
and $p_i(\mathfrak{B}_{j})$; this dependence could be mediated via the actual 
world $\mathfrak{B}_{j+1}$.

\medskip

\medskip

\noindent
Of course one does not have to use standard logic to model truncated and limited
reasoning, but also, e.g., complexity classes and computation
devices with suitably limited capacities. The mental models above are a
starting point, but of course one would like to add more general features to the picture.
For example, probabilistic
and fuzzy features (e.g., probabilistic weights on the literals and even general axioms)
are surely interesting. And obviously probability theory is not likely to suffice, but generalizations are needed.
Other approaches that also immediately suggest themselves
include using neural networks and other frameworks that involve possibilities for heuristic reasoning.
The obvious places where to use neural networks
concern the perception and decision functions $p_i$ and $d_i$.
A neural network device would be a natural option for producing the outputs of $p_i$. It would
look at some small part of the current model (and perhaps its history) and operate based on that.
Also $d_i$ could quite naturally be computed,
based the mental model, via a neural network device. We could even remove the mental
model from between $p_i$ and $d_i$ altogether, if desired. However, concerning 
human agents, it would
ultimately perhaps be more informative to combine
the use of neural networks with more classical features.

\medskip

\medskip

\noindent
It is worth noting that in our concrete 
example of a mental model, the set $\mathcal{F}\cup\mathcal{F}'$
approximated a first-order model.
But human agents more typically entertain picture-like
representations of models, that is, drawings of structures
rather than the structures themselves. It can be difficult to detect, e.g., graph isomorphism.
To account for more geometric mental models, we could
modify the $\mathcal{F}\cup\mathcal{F}'\cup\mathcal{B}$ approach a bit. 
The idea is to add three-dimensional
grids to the setting. Let $G_1,\dots , G_{\ell}$ be such grids.\footnote{These are models with 
three binary relations, $H$ indicating the left-to-right neighbour relation, $V$ indicating
the down-to-up neighbour relation, and $D$ indicating the closer-to-the-viewer relation. 
The relations are analogous to 3D coordinate axis orientations.}
We let each grid have a finite (but perhaps large)
domain and thus correspond to a finite set of points forming a rectangular cuboid array.
Now, we identify each (or alternatively, some) of the elements $b\in B'$ appearing
in the literals of $\mathcal{F}\cup\mathcal{F}'$ with some grid point. If there are
elements in the formulae of $\mathcal{B}$ that do not occur in the literals, then those
elements can also be identified with grid points. It is natural (but not necessary) to 
require that each literal has its elements in a single grid. Now the
patterns described via $\mathcal{F}\cup\mathcal{F}'$ have become
geometric objects (we draw the tuples into the grids in the obvious way). We have
drawings in three dimensions (and these could be made two dimensional as well).
The reason we have
started with several rather than a single grid is that typically an agent
entertains a collection of mental images rather than a single one.

\medskip

\medskip

\noindent
It is interesting to note that while $\mathcal{F}\cup\mathcal{F}'$ corresponds to 
knowledge, $\mathcal{B}$ in some sense relates to understanding, or at least more
abstract knowledge. We could add a set $\mathcal{C}$ to $\mathcal{F}\cup\mathcal{F}'\cup\mathcal{B}$, 
this being a set of suitably encoded reasoning algorithms that the
agent could then use on the formulae in $\mathcal{F}\cup\mathcal{F}'\cup\mathcal{B}$ 
and their more or less immediate logical consequences.
$\mathcal{C}$ could contain at least some proof rules (as the set $\mathcal{I}$
discussed above did). Now $\mathcal{C}$
would relate quite nicely to
understanding and the look-up-table-like set $\mathcal{F}\cup\mathcal{F}'$ to
knowledge. Of course somehow truncated reasoning, not 
full logical consequence, would be natural. Indeed, full logical consequence seems to relate to 
potential \emph{knowability} rather than knowledge.

\medskip

\medskip

\noindent
Summarizing this section so far, we have identified ways to model partial information and even
\emph{false information} via mental models given by $p_i$.
A partially false and strongly incomplete picture is a reasonably
natural starting point for modeling attempts.
We have also discussed how $d_i$ could take into account limitations in
reasoning capacities. There are many ways to do this, and obviously a huge
range of issues to investigate.

\medskip

\medskip

\noindent
So far we have concentrated on the positional scenario.
In the general scenario, however, the functions $p_i$ and $d_i$
are \emph{very much conceptually analogous} to their counterparts in the positional scenario, so 
the above investigations also largely apply conceptually in the general setting.
Formally, the domain of $p_i$ is the set of structure-ended $(S,A,I)$-sequences and
the output is a mental model. It is perhaps most natural to make the domain of $d_i$
simply the set of mental models, as in the positional scenario. Indeed, even in the positional 
scenario, some parts of histories would often become encoded in the mental models, as there can be
history features encoded in a current structure.
However, it can also be reasonable to, e.g., let $d_i$ depend on the chain of mental models leading to the 
current one, and the agent's own action in every past iteration step.

\subsection{Further issues}

\noindent
Systems can be used to model games, computation and physics systems, to 
name a few possibilities. Indeed, all kinds of interactive scenarios are reasonably naturally 
modeled by systems. Concerning applications in physics systems, the advantage of our formal
systems is the possibility of concretely modeling supposed mental entities (agents and $G$) together
with the supposably physical part (structures and $F$).\footnote{We note that the division ``agents and $G$'' vs
``structures and $F$'' does not necessarily provide a strict gap between what would be 
conceived as mental and what physical. Indeed, of course the supposed mental and physical 
realms are likely to show some connection between them to enable interaction between the realms.
The agents realistically have perception functions $p_i$ via 
which they see the structures in the system domain.
And the function $F$ looks at the \emph{actions} of the agents and 
provides an output partially based on that. However, we coud assert, e.g., that $F$
only sees the actions once they have been performed, making $F$ fully material in some sense.}

\medskip

\medskip

\noindent
Cellular automata provide a starting point for digital physics, but systems, as defined above, are much 
more flexible.\footnote{Of course one of the most obvious ideas is to make 
functions computable or semi-computable. But it is interesting to keep also
more general functions in the picture, for example it could be
quite natural to let $G$ be uncomputable. And it is often natural to let $F$
output infinite sets.}
The metaphysical setting of systems
provides a lot of explanatory power for understanding phenomena.\footnote{Indeed, it is natural to
regard systems as a framework providing a formal metaphysical setting for
modeling seemingly less fundamental frameworks with more contingent properties,
such as particular physical processes, for instance.}
The way the supposedly mental constructs ($G$ and each $f_i$) interact with the material 
parts is highly interesting. As systems are fully formal, concrete modeling attempts
will force new concepts and insights to emerge. 

\medskip

\medskip

\noindent
One of the most concrete and obvious advantages of systems when compared to, e.g.,
standard cellular automata, is
that it is not necessary to keep agents (and other entities) local. Furthermore, it is
not necessary (although can be natural) to keep agents and other entities computable.
However, computability and semi-computatbility are obviously very important 
issues. As suggested in \cite{a15k2}, extensions of the Turing-complete logic $\mathcal{L}$ can be
naturally used as logics to guide systems. We will discuss this issue below in Section \ref{systems and logic}.

\subsection{More general systems}

Our notion of a system can of course be generalized. Indeed, currently
every current structure has a finite history leading to it. To allow for infinite
past evolutions, and to get rid of the discreteness of the steps between subsequent 
models, we define the following notion.

\medskip

\medskip

\noindent
A \emph{total g-system} (g for general\footnote{A \emph{g-system} is
defined to be a system that can be
obtained from a total g-system by allowing some of the involved
functions to be partial.}) is 
defined to be a tuple $$(S,(R_j)_{j\in J},F,(f_i)_{i\in I},G)$$ such that the following conditions hold.

\begin{enumerate}
\item
$S$ is a set of structures.
\item
Each $R_j\subseteq S^{k_j}$ is a $k_j$-ary relation
over $S$. Intuitively, $R_j$ could for example give a partial order of the structures in $S$ that
corresponds to time. But of course other interpretations are possible.
%
%
\item
$F$ is a function $\mathcal{P}(S)\times A^I\ \rightarrow\ \mathcal{P}(\mathcal{P}(S))$.
Intuitively, $F$ maps each history (a set of
structures in $S$) to a set of extended 
evolutions (a collection of subsets of $S$). The output depends also on the 
actions of the agents.
\item
Each $f_i$ is a function $\mathcal{P}(S)\rightarrow A$ from histories to actions.
\item
$G$ is a function $\mathcal{P}(S)\times A^I\ \rightarrow\ \mathcal{P}(S)$
such that $G(t)\in F(t)$ for each input where $F(t) \not = \emptyset$.
If $F(t) = \emptyset$, then $G(t) = \emptyset$. Intuitively, $G$ just picks the 
actual outcome from the set of possible outcomes given by $F$.\footnote{We could of
course combine the actions of $F$ and $G$ and thereby only have one function, but it is
nice and natural to include both of them.}
\end{enumerate}

\noindent
This is a relatively general approach. For example, it is possible to cover 
cyclic approaches to time, even dense ones, for example by 
mapping $S$ into $\mathbb{R}^2$. And of course one can 
consider approaches with no time concept in the first place.

\medskip

\medskip

\noindent
A basic notion in total $g$-systems is a set of structures. The principal intuition of
such a set is a history of some kind. Note that histories do not this time contain actions, so single action tuples in $A^I$ are
perhaps most naturally continuous processes acting all the way through the input (a history). But of course actions
could be embedded into histories in a different way, leading to generalizations.
Another one of the reasonable further generalizations is to
base actions on sets of histories instead of a single one. This leads to 
systems $(\mathcal{P}(S),(R_j)_{j\in J},F,(f_i)_{i\in I},G)$ with the following specification.\footnote{
The specification is close to simply replacing $S$ in the previous specification by $\mathcal{P}(S)$,
but not exactly the same.}

\begin{enumerate}
\item
$S$ is a set of structures.
\item
Each $R_j\subseteq S^{k_j}$ is still simply a $k_j$-ary relation
over $S$.
%
%
\item
$F$ is a function $\mathcal{P}(\mathcal{P}(S))\times A^I\
\rightarrow\ \mathcal{P}(\mathcal{P}(\mathcal{P}(S)))$.
\item
Each $f_i$ is a function $\mathcal{P}(\mathcal{P}(S))\rightarrow A$.
\item
$G$ is a function $\mathcal{P}(\mathcal{P}(S))\times A^I\ \rightarrow\ \mathcal{P}(\mathcal{P}(S))$
such that $G(t)\in F(t)$ for each input where $F(t) \not = \emptyset$.
If $F(t) = \emptyset$, then $G(t) = \emptyset$. 
\end{enumerate}

\noindent
Further generalizations would involve, e.g., putting weights on structure sets and
sets of structure sets. And so on and so on.

\medskip

\medskip

\noindent
A highly general setting to model nested beliefs can be based on 
the concurrent game models of Alternating-time temporal logic. Consider the 
reasonably flexible concurrent game models as defined in, inter alia, \cite{aamas}.
These can be given canonical tree unravelings; we begin from a 
single state and unravel from there. This gives an unraveled model with a root.
We let $\mathcal{T}$ be a set of
such unravelings. One of the unravelings could be what is actually happening, but we will
also model (possibly false and even unrealizable) beliefs about time flow.

\medskip

\medskip

\noindent
Now, each state of $\mathcal{T}$ has a
unique history.
 (Recall that a state is now a copy of a state in
some original model, but also with a unique history.) Given the
set of agents is $K$, suppose there is, for each $k\in K$, a 
binary relation $R_k\subseteq Q\times Q$, where $Q$ is the set of 
states of $\mathcal{T}$. Intuitively, $(q_1,q_2)\in R_k$ if in
the state $q_1$, the agent $k$ considers it possible that (s)he is
currently in state $q_2$. So these are epistemic relations, and 
they can point from one unraveling to another. The nice 
thing here is that the states have a unique history, so the 
binary relations are also binary epistemic
relations over the set of histories. And each history has a sequence of 
changing beliefs about the current history, et cetera.

\medskip

\medskip

\noindent
Now we can analyse interesting nested beliefs that also involve temporal
statemens. Suppose $k$ is at $q_a$ and $R_k$ points only to $q_b$
from $q_a$. Now $k$ believes to be at $q_b$. Suppose the predecessor of $q_b$ is $q_c$.
Now $k$ thinks the previous state was $q_c$. Now suppose $q_c$ is also
the predecessor of $q_a$. Then $k$ is right about the
previous state but for a wrong reason.\footnote{All kinds of questions rise about the
setting, the epistemic relations, and so on. For example, one would typically---but perhaps not
always---require the epistemic relations to be transitive. But, we
shall not discuss this setting in depth here.}

\medskip

\medskip

\noindent
Here we did not nest the beliefs of different agents, $k$ and $l$ for example. 
But that is of course possible, leading to beliefs about beliefs with a temporal dimension, and
so on and so on.
All this is nice and quite general. The mental model of $k$ at $q$ could be considered to be the set of
pairs $(\mathcal{T},q')$ such that $(q,q')\in R_k$.\footnote{If we want to remember the 
original models that gave rise to the unravelings, we get an interesting \emph{static-sameness
relation} over the set of states of $\mathcal{T}$ defined such that $q$ and $q'$ are related if
the last state in  $q$ and $q'$ is the same state (and originates from the same model).
Note that $q$ and $q'$ are indeed sequences formally.}
However, in the current article we are very much interested in
using (what would be) the \emph{internal} structure of states. In $\mathcal{T}$, the 
states do not have an internal structure.\footnote{They have no internal
structure unless we look into the sequences (in the 
original models) that define the states in $\mathcal{T}$ when we unravel models. But we do
not mean to look into them here (with the exception of defining the static-sameness relations).}
The setting of $\mathcal{T}$ uses epistemic relations that in a sense seem  blind to
the possible internal structures of states. Nevertheless, both the
internal view and the external one 
can be useful, and surely the approaches can be combined. Indeed,
states of $\mathcal{T}$ might as well be (replaced by)
relational structures, and conversely, our structure-based setting with systems and mental
models does suggest global epistemic relations for agents.

\medskip

\medskip

\noindent
In $\mathcal{T}$ and also generally in Kripke models, it is interesting to
define a metaphysical (rather than epistemic) relation $S_k$ for agents $k$. As $R_k$, this is
also a binary relation on states of $\mathcal{T}$ (or some other Kripke model).
Now, using the relations $S_k$ and $R_k$, it is at least relatively natural to define the 
standard indicative implication ``$\varphi$ implies $\psi$'' as $[ R_k ](\varphi\rightarrow \psi)$ and the subjunctive 
implication ``if $\varphi$ was the case, then so would $\psi$'' as $[ S_k ](\varphi\rightarrow \psi)$.
Here $[ R_k ]$ and $[ S_k ]$ are boxes with the accessibility relations $R_k$ and $S_k$, and $\varphi$
and $\psi$ are formulae whose truth sets are sets of states. The metaphysical modalities $S_k$ are
likely to be similar or even the same for all agents. Typically they would be equivalence relations, but not
necessarily always. Finally, it is worth noting that two states can naturally be coupled
with an \emph{indistinguishability relation} of agent $k$ if $k$ sees precisely the
same states from the two states via the epistemic relation $R_k$.



\section{Systems and logic}\label{systems and logic}

\noindent
The article \cite{tc} defines a natural Turing-complete extension $\mathcal{L}$ of
first-order logic $\mathrm{FO}$. This new logic is Turing-complete in
the sense that it can define precisely all recursively enumerable classes of finite structures.
The logic is based on adding two new capacities to $\mathrm{FO}$.
The first one of these is the capacity to modify models. The logic can \emph{add new points} to models
and \emph{new tuples} to relations, and dually, the logic can
\emph{delete} domain points and tuples from relations.\footnote{Strictly speaking, the 
system defined in \cite{tc} did not include the capacity to \emph{delete points} from model domains.
However, this possibility was briefly discussed, and it was then ruled out only due to page limitations in the paper. The reason for leaving out the capacity to delete domain points was mainly related to the fact that this can lead to 
variables $x$ whose referent has gone missing from the model domain. Also empty models appear.
However, in the current article we let $\mathcal{L}$ refer to the logic that also has the domain
element deletion operator (and the empty model is fine). Furthermore, \cite{tc} made the
some other limitations to the syntax of $\mathcal{L}$ so that a semantic game does not
lead to anomalous situations where again $x$ has no value (even if there are no
domain element deletions). Such situations were described to result in 
from \emph{non-standard jumps}. Here we impose no limitations on the syntax. Basically the 
result of these relaxations is simply more situations where
neither player has a winning strategy in the game. Also, domain element deletion is crucial in 
various scenarios that
allow all computable model transformations to be modeled directly.} 
The second new capacity is the possibility of formulae to refer to themselves.
The self-referentiality operator of $\mathcal{L}$ is based on a construct that enables \emph{looping}
when formulae are evaluated using game-theoretic semantics.\footnote{See \cite{tc} for sufficient details on 
game-theoretic semantics, and see
\cite{hinti}, \cite{lore} for some early ideas leading to the notion of game-theoretic semantics.}

\medskip

\medskip

\noindent
The reason the logic $\mathcal{L}$ is
particularly interesting lies in its \emph{simplicity} and its
\emph{exact behavioural
correspondence} with Turing machines.
Furthermore, it provides a natural and particularly simple
\emph{unified perspective} on logic and
computation. Also, the new operators of $\mathcal{L}$ directly
capture two fundamental classes of
constructors---missing from $\mathrm{FO}$---that are used all the time in everyday mathematics:

\begin{enumerate}
\item
fresh points are added to constructions
and fresh lines are drawn, et cetera, in various contexts in, e.g., geometry, and
\item
recursive operators are omnipresent in mathematical practice,
often indicated using the three dots (...).
\end{enumerate}

\medskip

\medskip

\noindent
One of the advantageous properties of $\mathcal{L}$ (in relation to typical logics) is that it can
indeed modify models. And models surely do not have be static, althought that is still the typical approach. Even in classical 
mathematics, we modify our structures. For example in compass-and-straightedge
constructions, we draw new points and lines. While there exist logics that modify 
structures (e.g., sabotage modal logic, some public announcement logics, et cetera), $\mathcal{L}$
offers a fundamental framework for modifications.

\subsection{The syntax and semantics of $\mathcal{L}$}

Here we  give the syntax and semantics of $\mathcal{L}$. For the 
full formal details, see \cite{tc}.
We let $\mathcal{L}$ denote the language that extends the
syntax specification of first-order logic by the following formula construction rules:
%

\vspace{-2mm}

\begin{enumerate}
\item
$\varphi\ \mapsto\ \mathrm{I}x\, \varphi$
\item
$\varphi\ \mapsto\ \mathrm{I}_{R(x_1, \dots , x_n)}\ \varphi$
\item
$\varphi\ \mapsto\ \mathrm{D}x\, \varphi$
\item
$\varphi\ \mapsto\ \mathrm{D}_{R(x_1, \dots , x_n)}\ \varphi$
\item
$C_i$ is an atomic formula (for each $i\in \mathbb{N}$)
\item
$\varphi\ \mapsto\ \, C_i\, \varphi$
\item
We also allow allow atoms $X(x_1,\dots , x_k)$ where $X\in\mathit{tsymb}$ is a $k$-ary relation
symbol not in the signature considered. The set $\mathit{tsymb}$ contains a countably infinite set of
symbols for each positive integer arity.\footnote{The name $\mathit{tsymb}$ comes from the fact that 
these symbols are analogous to Turing machine tape symbols, i.e., symbols not part of the input language.
It is conjectured in \cite{tc} that
the symbols in $\mathit{tsymb}$ are not needed for Turing-completeness of $\mathcal{L}$,
unless the background signature contains no symbols of arity at least two. The $R$ in the 
operators $\mathrm{I}_{R(x_1,\dots x_n)}$
and $\mathrm{D}_{R(x_1,\dots x_n)}$ can be a relation symbol in the signature or a
symbol in $\mathit{tsymb}$.}
\end{enumerate}
\noindent
Intuitively, a formula of type $\mathrm{I}x\, \varphi(x)$ states that it is 
possible to \emph{insert} a fresh, isolated element $u$ to the domain of the current model so
that the resulting new model satisfies $\varphi(u)$. The fresh element $u$ being \emph{isolated} 
means that $u$ is disconnected from the original model; the relations of the original model are
not altered in any way by the operator $\mathrm{I}x$, so $u$ does not become part of
any relational tuple at the moment of insertion. (Note that we assume a purely relational signature for
the sake of simplicity.)

\medskip

\medskip

\noindent
A formula of type $\mathrm{I}_{R(x_1, \dots , x_n)}\ \varphi(x_1,\dots , x_n)$
states that it is possible to insert a tuple $(u_1,\dots , u_n)$ to the relation $R$ so
that $\varphi(u_1,\dots , u_n)$ holds in the obtained model. The tuple $(u_1,\dots , u_n)$ is a
sequence of elements in the original model, so this time the domain of the model is 
not altered. Instead, the $n$-ary relation $R$ obtains a new tuple. The deletion 
operators $\mathrm{D}x$ and $\mathrm{D}_{R(x_1, \dots , x_n)}$ 
have obvious dual intuitions to the insertion operators.

\medskip

\medskip

\noindent
The new atomic formulae $C_i$ can be regarded as \emph{variables} ranging over
formulae, so a formula $C_i$ can be considered to be a \emph{pointer} to (or the
\emph{name} of) some other formula.
The formulae $C_i\, \varphi$ could intuitively be given the following reading:
\emph{the claim $C_i$, which states that $\varphi$, holds.}
Thus the formula $C_i\, \varphi$ is both \emph{naming} $\varphi$ to be called $C_i$
and \emph{claming} that $\varphi$ holds.\footnote{It is worth noting that the
approach in $\mathcal{L}$ to formulae $C_i\, \varphi$ bears \emph{some}
degree of purely \emph{technical} similarity to evaluations of fixed-point
operators of the $\mu$-calculus via game-theoretic semantics. However, that 
approach to fixed-point operators has not---to the author's knowledge---been connected to
self-referentiality and the related concepts in any way. Indeed, the approach of $\mathcal{L}$ is---to the
author's knowledge---conceptually novel, and has game-theoretic semantics as an
underlying primitive starting point. Furthermore, the
approach in $\mathcal{L}$ is
fully general and not \emph{explicitly} related to any fixed-point concepts. For example,
there are no monotonicity restrictions imposed on formulae, unlike in the $\mu$-calculus for example.
Another thing worth noting here is that \cite{tc} simply uses \emph{numbers} as
formula variables (which here are symbols $C_i$).}
Importantly, the formula $\varphi$ can contain $C_i$ as an atomic formula.
This leads to self-reference. For example, the liar's paradox
now corresponds to the formula $C_i\, \neg\, C_i$. .

\medskip

\medskip

\noindent
The logic $\mathcal{L}$ is based on game-theoretic semantics GTS which 
directly extends the standard GTS of $\mathrm{FO}$. Recall that the GTS of $\mathrm{FO}$ is
based on games played by the \emph{verifier} and \emph{falsifier}, or more accurately,
between \emph{Eloise} and \emph{Abelard}, Eloise first holding the verifying role (which can
change if a negation is encountered). In a 
game for checking if $\mathfrak{M}\models \varphi$, Eloise is trying to show (or 
verify) that indeed $\mathfrak{M}\models\varphi$ and Abelard is opposing this, i.e.,
Abelard wishes to falsify the claim $\mathfrak{M}\models\varphi$. The 
players start from the original formula and work their way towards subformulae and ultimately atoms. 
See \cite{tc} for further details concerning the GTS of $\mathrm{FO}$ and also $\mathcal{L}$.

\medskip

\medskip

\noindent
We now discuss how the
rules for the $\mathrm{FO}$-game are extended to deal with $\mathcal{L}$.
Further details are indeed given in \cite{tc}. Each game position involves a model -assignment
pair $(\mathfrak{M},f)$ and a formula $\psi$. The point of the
assignment $f$ is to give interpretations to the free
variable symbols of $\psi$ in the domain of $\mathfrak{M}$. 
A game position also specifies which one of Eloise and Abelard is the verifying player.
Furthermore, there is an assignment that gives interpretations of the relations $X$ not in the signature. In
the beginning of the game play, the relations $X$ are all empty relations, so they must be
built by adding tuples during the game play. For simplicity, we do not
explicitly write down this assignment for relations $X$ below, but instead assume it is somehow encoded into the 
models involved.\footnote{For example, we could assume that each $X$ in the formula we are evaluating is 
interpreted in the model we are investigating, being originally interpreted as the empty relation. But despite
that, the relations $X$ are not considered part of the official signature of the model.} The game rules go as follows.

\begin{enumerate}
\item
In a position involving $(\mathfrak{M},f)$ 
and the formula $\mathrm{I}x\, \psi(x)$, the game is
continued from a position with $(\mathfrak{M}',f[x\mapsto u])$ and $\psi(x)$, where $\mathfrak{M}'$ is
the model obtained by simply inserting a fresh isolated point $u$ to the 
domain of $\mathfrak{M}$. The fresh point is named $x$.
\item
In a position with $(\mathfrak{M},f)$ and $\mathrm{I}_{R(x_1,\dots , x_n)}\psi(x_1,\dots , x_n)$, 
the verifier chooses a tuple $(u_1,\dots , u_n)$ of
elements in $\mathfrak{M}$ and
the game is continued from the
position with $(\mathfrak{M}',f[x_1\mapsto u_1, \dots , x_n\mapsto u_n])$
and $\psi(x_1,\dots , x_n)$ where $\mathfrak{M}'$ is
obtained from $\mathfrak{M}$ by inserting the tuple $(u_1,\dots , u_n)$ to
the relation $R$. Note that $R$ can be part of the signature or one of the
relations $X$ outside the signature.
\item
Consider a position involving $(\mathfrak{M},f)$ 
and the formula $\mathrm{D}x\, \psi$. Now the game is
continued from a position with $(\mathfrak{M}',f\setminus\{(z,u)\, |\, z\in\mathrm{VAR}\})$
and $\psi$, where $\mathfrak{M}'$ is
the model obtained by deleting the point $u$ such that $f(x) = u$ from $\mathfrak{M}$ 
(and $\mathrm{VAR}$ is the set of all first-order variable symbols). If no such point $u$ exists, i.e., if $f$
does not have $x$ in the function domain, then nothing is done. Note that the
assignment function $f\setminus\{(z,u)\, |\, z\in\mathrm{VAR}\}$ is of course
obtained from $f$ by removing the pairs of type $(z,u)$ where $z$ is a variable. Thus, in 
particular, the pair $(x,u)$ is removed.
\item
In a position with $(\mathfrak{M},f)$ and $\mathrm{D}_{R(x_1,\dots , x_n)}\psi(x_1,\dots , x_n)$, 
the verifier chooses a tuple $(u_1,\dots , u_n)$ of
elements in $\mathfrak{M}$ and
the game is continued from the
position with $(\mathfrak{M}',f[x_1\mapsto u_1, \dots , x_n\mapsto u_n])$
and $\psi(x_1,\dots , x_n)$ where $\mathfrak{M}'$ is
obtained from $\mathfrak{M}$ by deleting the tuple $(u_1,\dots , u_n)$ from
the relation $R$. If there is no such tuple in $R$, then the relation stays as it is. As above, we
note that $R$ can be in the signature or one of the
relations $X$ outside the signature.
\item
In a position involving $(\mathfrak{M},f)$ and $C_i\, \psi$, we
simply move to the position involving $(\mathfrak{M},f)$ and $\psi$.
\item
In an atomic position involving $(\mathfrak{M},f)$ and $C_i$, the game
moves to the position $(\mathfrak{M},C_i\, \psi)$. Here $C_i\, \psi$ is a subformula of
the original formula that the semantic game began with. If there are many such 
subformulae $C_i\, \psi$, the verifying player can freely jump to any of them. If
there are no such formulae, the game play ends with neither player winning.\footnote{An 
alternative convention would be to jump to the immediately superordinate 
formula $C_i\, \psi$ in the cases where there are many choices. If no such 
immediately superordinate choice was available, the game play would end with 
neither player winning.}
%
%
%
\item
In a position with $(\mathfrak{M},f)$ and an atom of type $R(x_1,\dots , x_n)$ or $x= y$, the 
game play ends. We denote the atom by $\psi$ and note that $R$ can once again be in
the signature or one of the symbols $X$. The verifier
wins if $(\mathfrak{M},f)\models \psi$, where $\models$ is
the semantic turnstile of standard $\mathrm{FO}$. The falsifier
wins if $(\mathfrak{M},f)\models \neg \psi$. If $\psi$ contains any variables that are not in
the domain of $f$, then neither player wins.
\item
The positions involving $\exists$, $\wedge$, $\neg$ are dealt with exactly as in 
standard first-order logic.
\end{enumerate}

\medskip

\medskip

\noindent
Just like the $\mathrm{FO}$-game, the extended game ends only if an atomic
position with an atom $R(x_1,\dots , x_n)$ or $x = y$ is encountered.\footnote{Well, now
the game can end also if $C_i$ refers to no formula $C_i\psi$, but this is anomalous.}
Here $R$ can be in 
the signature or one of the relations $X$.
The winner is then decided precisely as in the $\mathrm{FO}$-game. That is, the
verifying player wins if the pair $(\mathfrak{M},f)$ in that position satisfies
the formula involved, and the falsifying player wins if $(\mathfrak{M},f)$ satisfies the
negation of the formula. In the anomalous unintended cases where $f$ does not interpret all of
the variables in the formula $R(x_1,\dots , x_n)$ or $x = y$ of the position,
neither player wins the play of the game.

\medskip

\medskip

\noindent
Since the play of the game can end
only if an atom $R(x_1,\dots , x_n)$ or $x = y$ is encountered, the
game play can go on forever, as for example the games for $C_i\, C_i$ and $C_i\, \neg\, C_i$
demonstrate. If a play indeed goes on
forever, then that play is won by neither of the players.

\medskip

\medskip

\noindent
Turing-machines 
exhibit precisely the kind of
behaviour captured by $\mathcal{L}$, as they can

\begin{enumerate}
\item
\emph{halt in an accepting state} 
(corresponding to Eloise---who is initially the verifier---winning the semantic game play),
\item
\emph{halt in a 
rejecting state} (corresponding to Abelard---who is the initial falsifier---winning),
\item
\emph{diverge} (corresponding to neither of the players winning).
\end{enumerate}

\medskip

\medskip

\noindent
Indeed, there  is a \emph{precise} correspondence
between $\mathcal{L}$ and Turing machines.
Let $\mathfrak{M}\models^+\varphi$ (respectively, $\mathfrak{M}\models^-\varphi$)
denote that Eloise (respectively, Abelard) 
has a winning strategy in the game beginning 
with $\mathfrak{M}$ and $\varphi$. Let $\mathrm{enc}(\mathfrak{M})$
denote the encoding of the \emph{finite} model $\mathfrak{M}$ according to some
standard encoding scheme.\footnote{The domain of a finite model can
be assumed to be a subset of $\mathbb{N}$, so an
implicit natural linear ordering is readily available for obtaining the encoding.}
Then the following theorem shows that $\mathcal{L}$
corresponds to Turing machines so that not
only acceptance and rejection but even divergence of Turing computation is captured in a
precise and natural way. The proof follows from \cite{tc}. In the theorem, by a \emph{Turing machine
for a structure problem}, we mean a Turing machine TM that gives an equivalent treatment to 
isomorphic inputs: for isomorphic $\mathfrak{M}$ and $\mathfrak{N}$, TM either 
accepts both $\mathit{enc}({\mathfrak{M}})$ and $\mathit{enc}({\mathfrak{N}})$;
rejects both; or diverges on both inputs.

\medskip

\medskip

\noindent
\begin{theorem}
For every Turing machine $\mathrm{TM}$ for a structure problem, there exists a formula $\varphi\in\mathcal{L}$
such that
\begin{enumerate}
\item
$\mathrm{TM}$ {accepts} $\mathrm{enc}(\mathfrak{M})$\hspace{1mm}
iff\hspace{3mm} $\mathfrak{M}\, {\models}^+\, \varphi,$
\item
$\mathrm{TM}$ {rejects} $\mathrm{enc}(\mathfrak{M})$\hspace{1mm}
iff\hspace{1mm} $\mathfrak{M}\, {\models^-}\, \varphi.$
\end{enumerate}
Vice versa, for every $\varphi\in\mathcal{L}$, there is a
$\mathrm{TM}$ such that the above two conditions hold.
\end{theorem}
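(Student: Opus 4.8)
The plan is to establish the two directions of the equivalence by building explicit translations between Turing machines and $\mathcal{L}$-formulae, using the game-theoretic semantics of $\mathcal{L}$ as the bridge. For the first direction, fix a Turing machine $\mathrm{TM}$ for a structure problem. I would first argue that, since $\mathrm{TM}$ treats isomorphic inputs equivalently, there is no loss in having $\mathcal{L}$ reconstruct the standard encoding $\mathrm{enc}(\mathfrak{M})$ of a finite input model $\mathfrak{M}$: using the implicit linear order available on a finite domain (which can be taken to be an initial segment of $\mathbb{N}$), an $\mathcal{L}$-formula can read off the relational diagram of $\mathfrak{M}$ tuple by tuple. The core of the construction is then to simulate the step function of $\mathrm{TM}$. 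The idea is to use a $\mathit{tsymb}$-relation (or several) to represent the tape contents, head position, and current state as tuples over the domain, extending the domain via $\mathrm{I}x$ whenever the simulated tape needs a fresh cell. One game round of the $\mathcal{L}$-formula corresponds to one step of $\mathrm{TM}$: the verifier (Eloise) is forced — by suitable use of $\exists$, $\wedge$, $\neg$ and atomic tests — to faithfully update the encoding relations according to the transition table, with any deviation immediately detectable and punished. The self-reference construct $C_i$ is what closes the loop: after performing one simulated step, the formula jumps back to $C_i\,\psi$, so the game play continues exactly as long as $\mathrm{TM}$ runs. Finally, reaching an accepting state of $\mathrm{TM}$ routes the game to a true atom (Eloise wins, giving $\mathfrak{M}\models^+\varphi$), reaching a rejecting state routes it to a false atom (so that after the role-swap Abelard wins, giving $\mathfrak{M}\models^-\varphi$), and divergence of $\mathrm{TM}$ corresponds to an infinite play, won by neither player — matching items $1$ and $2$ of the theorem.

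For the converse direction, given $\varphi\in\mathcal{L}$ I would describe a Turing machine $\mathrm{TM}$ that, on input $\mathrm{enc}(\mathfrak{M})$, directly simulates the semantic game for $(\mathfrak{M},\varphi)$. The machine maintains on its tape a representation of the current game position: the current (finite) model with its assignment, the interpretations of the $\mathit{tsymb}$-relations $X$, the current subformula, and which player is verifying. Each of the game rules $1$–$9$ is a primitive recursive manipulation of this data — inserting or deleting a point or a tuple, stepping into a subformula, swapping roles, or, for the self-reference rule $6$–$7$, looking up the named subformula $C_i\,\psi$ within (a stored copy of) the original $\varphi$. Since we are trying to decide whether Eloise or Abelard has a \emph{winning strategy}, and winning strategies in these games are themselves about navigating a tree of choices, the machine should be set up so that the \emph{moves} it makes correspond to following a fixed (encoded) strategy; the cleanest route is to note, as in \cite{tc}, that the relevant games are determined-or-diverging in a way that lets a machine track a single strategy and accept, reject, or diverge accordingly. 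If the game play reaches an atom true under the verifier, $\mathrm{TM}$ halts accepting; if it reaches an atom false under the verifier, $\mathrm{TM}$ halts rejecting; if the play never reaches an atom, $\mathrm{TM}$ diverges. This yields exactly the two biconditionals of the theorem in the reverse direction.

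Since the statement says the result follows from \cite{tc}, the proof I would write is essentially an assembly of the machinery there, and I would be explicit that the only genuinely new wrinkle is the presence of the domain-deletion operator $\mathrm{D}x$ (which \cite{tc} discussed but did not fully include) and the removal of the syntactic restrictions of \cite{tc}: I would check that deletions and "non-standard jumps" only ever create \emph{additional} positions in which neither player wins, and that this does not disturb the correspondence, because divergence of $\mathrm{TM}$ is precisely the "no winner" outcome. The main obstacle, and the step I expect to occupy most of the work, is the faithful single-step simulation of the transition function inside $\mathcal{L}$ — in particular, forcing the verifier, via the interplay of $\exists$/$\wedge$/$\neg$ and atomic consistency checks, to update the $\mathit{tsymb}$-encoded tape exactly and to extend the domain with $\mathrm{I}x$ at the right moments without being able to cheat; everything else (reading $\mathrm{enc}(\mathfrak{M})$, the $C_i$-loop, the accept/reject/diverge correspondence, and the converse simulation) is routine bookkeeping once this core gadget is in place.
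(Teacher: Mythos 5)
Your forward direction matches, at the expected level of detail, the construction that the paper implicitly relies on (the paper itself gives no self-contained argument but defers to \cite{tc}): simulate the configuration of $\mathrm{TM}$ in the $\mathit{tsymb}$-relations, force faithful updates via the interplay of $\exists$, $\wedge$, $\neg$ and atomic challenges, close the loop with $C_i$, and match accept/reject/diverge with Eloise-wins/Abelard-wins/no-winner. One phrasing should be repaired: an $\mathcal{L}$-formula is isomorphism-invariant, so it cannot ``read off'' the implicit linear order on the domain that $\mathrm{enc}$ uses; rather, Eloise must first \emph{construct} (guess) an ordering, e.g.\ a successor relation built from $\mathit{tsymb}$-symbols, and produce an encoding relative to that guessed order. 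It is exactly the hypothesis that $\mathrm{TM}$ is a machine \emph{for a structure problem} that makes the choice of guessed order immaterial, so this is a repair of wording rather than of substance.

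The genuine gap is in the converse direction. As written, your machine ``tracks a single (encoded) strategy'' and accepts or rejects according to how that one simulated play ends. But the input is only $\mathrm{enc}(\mathfrak{M})$ and $\varphi$; no strategy is supplied, and the two biconditionals concern the \emph{existence of winning strategies}, a property of the entire game tree rather than of one play. If neither player has a winning strategy, there may nevertheless be individual plays that Eloise wins, so a single-play simulation can accept even though $\mathfrak{M}\not\models^+\varphi$; moreover these games are not determined, so ``determined-or-diverging'' cannot be used to collapse the search to one play. The missing idea is a systematic search of the game tree: from any position over a finite model the game is finitely branching and each position is computable (the verifier chooses among finitely many tuples, and a $C_i$-jump has finitely many targets), so by K\"onig's lemma a winning strategy for either player is witnessed by a \emph{finite} strategy subtree, all of whose plays reach a winning atom within some bound $n$. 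The machine should therefore enumerate $n=1,2,\dots$, decide by backward induction on the depth-$n$ truncation whether Eloise (respectively Abelard) can force a win within $n$ moves, halt accepting or rejecting when such a finite witness appears, and diverge otherwise. With that replacement your outline is correct and coincides with the route the paper intends via \cite{tc}.
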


\medskip

\medskip

\noindent
Technically this is a result in descriptive complexity theory showing that $\mathcal{L}$ captures
the complexity class RE (recursive enumerability). 
While the result concerns finite models, it is possible to
extend the result to deal with arbitrary models. The idea is to extend Turing machines to
suitable hypercomputation models while
allowing iteration in $\mathcal{L}$ to repeat for $\omega$ rounds and beyond.

\medskip

\medskip

\noindent
Since $\mathcal{L}$ captures RE, it cannot be closed 
under negation. Thus $\neg$ is not the classical negation. However, $\mathcal{L}$
has a very natural translation into natural language. The key is to
replace \emph{truth} by \emph{verification}. We read $\mathfrak{M}\models^+\varphi$ as
the claim that ``\emph{it is verifiable that $T(\varphi)$}" where $T$ is
the translation from $\mathcal{L}$ into natural language defined below.
We give two ways to translate atoms $x=y$ and $R(x_1,\dots, x_n)$. The first 
way (given in clause 1 below) covers the case where in each game position, every first-order 
variable must get a value assigned to it via the assignment function $f$.
Clause 9 gives a more careful reading for $x=y$ and $R(x_1,\dots, x_n)$ which
covers also the anomalous cases where $f$ may not 
give values to all variables.
\begin{enumerate}
\item
We translate $x=y$ and $R(x_1,\dots , x_n)$ to themselves, so 
for example $T(x=y)$ simply reads \emph{$x$ equals $y$}.
\item
The atoms $C_i$ are read as they stand, so $T(C_i) = C_i$.
\item
The $\mathrm{FO}$-quantifiers translate in the standard way, so 
 we let $T(\exists x \varphi) = \text{\emph{there exists an} $x$ 
\emph{such that} }T(\varphi)$ and analogously for $\forall x$.
\item
Also $\vee$ and $\wedge$
translate in the standard way, so
$T(\varphi\vee \psi) = T(\varphi)\text{ \emph{or} }T(\psi)$ and analogously for $\wedge$.
\item
However, $T(\neg\psi) = \text{\emph{it is falsifiable that }}T(\psi)$. Thus negation translates to
the dual of verifiability.
\item
Concerning the
insertion operators, we let $$T(\mathrm{I}x\, \varphi) =
\text{\emph{it is possible to insert a 
new element $x$ such that }}T(\varphi).$$ Similarly, we let
\begin{multline*}T(\mathrm{I}_{R(x_1,\dots , x_n)}\, \varphi) =\\
 \text{\emph{it is possible to insert a tuple }} (x_1,\dots , x_n)
\text{ \emph{into $R$ such that }}T(\varphi).
\end{multline*}
\item
Deletion operators can also be given similar natural readings.
\item
Finally, we let $$T(C_i\, \varphi\, )
= \text{\emph{it is possible to verify the claim }}  C_i 
\text{\emph{ which states that }} T(\varphi).$$
\item
We can always give the following alternative and more careful
readings to first-order atoms $x=y$ and $R(x_1,\dots , x_n)$:
\begin{enumerate}
\item
$T(x = y)$ states that \emph{the referent of $x$ is
equal to the referent of $y$}.
\item
$T(R(x_1,\dots , x_n))\ =$ \emph{the referents of $x_1,\dots, x_n$
form a tuple in $R$ in the given order.}\footnote{Note
that even ending up with an atom $x=x$ (or $\neg x = x$), without $f$
specifying a value for $x$, leads to neither player winning the game. This is
natural with the given reading for atoms. The formulae can indeed quite naturally be
considered indeterminate with respect to verification/falsification when $x$ has no value.}
\end{enumerate}
\end{enumerate}

\noindent
Thereby $\mathcal{L}$ can be seen as a \emph{simple Turing-complete fragment of
natural language}. Indeed, the simplicity of $\mathcal{L}$ is one of its main strengths.
Also, as typical computationally motivated logics translate into $\mathcal{L}$ more or
less directly, $\mathcal{L}$ can be used as a \emph{natural umbrella logic} for studying complexities of
logics. This can be advantageous, as the number of different 
logic formalisms is huge. Thus $\mathcal{L}$ offers a natural \emph{unified framework} for a programme of
studying, e.g., validity and satisfiability problems. First-order logic is \emph{not a suitable
umbrella logic} for such a programme, being expressively too weak. The expressivity of $\mathcal{L}$, on
the other hand, is of a fundamental nature, due to its Turing-completeness.
Furthermore, $\mathcal{L}$ offers a \emph{top platform} for 
descriptive complexity. Indeed, $\mathcal{L}$ can easily capture 
classes beoynd the class $\mathrm{ELEMENTARY}$, while no $k$-th order logic can. Again $\mathcal{L}$
would serve as a natural, \emph{unifying}
umbrella logic.\footnote{We note that RE, as a limit of computation, is indeed a reasonable upper 
bound for standard descriptive complexity.} All in all, $\mathcal{L}$ could be used as a
\emph{unified framework} for working on---inter alia---many kinds of reasoning issues (validity, satisfiability) as
well as topics relating to expressivity. In the next section we
analyse some further conceptual issues concerning $\mathcal{L}$.

\subsection{Further properties of $\mathcal{L}$}

\noindent 

\noindent

\noindent
It is interesting to note that $\neg$ can be 
read as the classical negation (rather than falsifiability) in those fragments of $\mathcal{L}$ where the 
semantic games are determined. Standard FO is such a fragment.
Furthermore, adding a generalized quantifier to $\mathcal{L}$ 
corresponds to adding a corresponding oracle to Turing machines; 
see \cite{tc} for further details.

\medskip

\medskip

\noindent
In our system, under our formal and fully explained
semantics, the sentence $C_i\neg C_i$ is indeterminate, and so is $C_i C_i$. This should be natural from any
perspective that accepts the semantics we gave.
To further analyse whether this is natural, let us consider $C_i\, C_i$ first. Now, a typical logic (such as FO) is
compositional, with well-founded formulae. This means each formula is essentially an
algebra term $f(t_1,\dots , t_k)$.
And the formula $f(t_1,\dots , t_k)$ has a \emph{meaning} which is determined by
applying the function $f$ to the \emph{meanings} of $t_1,\dots, t_k$.
The well-foundedness means that the algebra term is finite, and ultimately has \emph{atomic}
formulae $x_1,\dots , x_k$ whose \emph{meaning} is fully determined in some uncontroversial and
independent way. Thus we can evaluate $f(t_1,\dots , t_k)$ in a
finite process, since the ultimately reachable atoms have already fully defined, independent meanings. Such logical
reductionism is handy indeed.

\medskip

\medskip

\noindent
However, at least in the sense of our semantics, $C_i C_i$ does not
have this kind of a well-founded evaluation process. Syntactically $C_i C_i$ is an algebraic term (the first $C_i$ is an
operator and the second one an atom). However, semantically, the meaning of the atom $C_i$ is not
already defined, but instead, it must be evaluated based on the full formula $C_i C_i$ (because from the atom $C_i$ we
jump back to the operator $C_i$ and continue checking from there). Therefore the meaning of $C_iC_i$ is
defined based on $C_i C_i$ itself. Thus it is natural to consider the sentence indeterminate.
The same holds for $C_i \neg C_i$. It
also tries to define its meaning based on itself. It is indeed natural to require meanings to be dug from an
external source in a reductionist way: if we define $q$ to be true if and only if $q$ is true, and no further
information about the situation can appear, it is
natural to consider $q$ indeterminate. Digging up the truth value
from the atomic level is impossible in the case of $C_i C_i$ and $C_i\neg C_i$. We note that, \emph{if}
one accepts the semantic game of $C_i\neg C_i$ to also be the
evaluation procedure of the actual liar sentence, then the 
explained lack of well-foundedness applies as such. This leads to an indeterminate truth value. However, of course, 
this can be considered paradoxical, as now the statement ``this sentence is false'' seems false, as the
sentence was supposed to be
indeterminate. But false is not
indeterminate, and so onwards, in the usual way, it seems to get different flipping truth values.\footnote{In
our system with our reading, $C_i\neg C_i$ states roughly that ``this sentence is falsifiable.'' 
The sentence formally evaluates to indeterminate. Now one could assert that since indeterminacy does not
equal falsifiability, we can conclude that the sentence is in fact false (meaning that
\emph{falsifiability was not the case}).
Now, ``false'' here meant
``\emph{falsifiability was not the case}'' and
\emph{indeterminate} is therefore consistent with this meaning of false.
(Note that we do not have ``false'' in our system.
Simply falsifiable, verifiable and indeterminate. We have dictated a semantics, we do not
compare truth values to anything external to the semantics.)}

\medskip

\medskip

\noindent
As we have discussed above, our formal systems, as 
defined in Section \ref{systems}, can be used to model a wide
variety of dynamical frameworks rather naturally. 
Now, it is obvious that the \emph{semantic
games of the logic $\mathcal{L}$ are systems} in our formal sense. Thus $\mathcal{L}$
can be directly used, inter alia, to model evolving physical frameworks. However, $\mathcal{L}$ is also a
natural setting for formalizing mathematics.
Indeed, $\mathcal{L}$ can be used as a possible, highly strict measure of
what counts as a \emph{mathematical claim}. Indeed, mathematics is 
\emph{intuitively and informally} something fully rigorous and somehow 
predetermined and objective. It is often \emph{considered} somehow mind independent and perhaps even of a
Platonic nature. Now, can we capture this intuition of strict objectivity?

\medskip

\medskip

\noindent
A nice starting point for capturing the
intuition would be to assert that a claim is
mathematical if we can determine whether it holds using some uniform
systematic procedure. The idea here is that there exists a systematic 
procedure $\mathbb{P}$ with a carefully defined set of inputs and the set $\{yes, no\}$
of outputs. The (not necessarily nice) requirement here is that the set of inputs $\mathbb{I}$ is somehow
rigorously fixed and quite limited. A natural option here would be that the
set $\mathbb{I}$ must be somehow extremely simple (for
example the collection of all finite strings over the alphabet $\{0,1\}$ or
the---suitably simple and certainly
decidable---collection of 
all formulae of some logic; we are thinking about $\mathcal{L}$ here).

\medskip

\medskip

\noindent
Another (not necessarily nice) requirement is that we
must pick a \emph{single} systematic procedure $\mathbb{P}$ to check, for 
each input $i\in\mathbb{I}$, whether $i$
holds or not.\footnote{A Turing machine is a systematic procedure. Sometimes 
Turing machines are described to capture what can be \emph{mechanically} 
executed. But ``mechanical'' is perhaps not as good a word as ``systematic.'' This is because the word
``mechanical'' has a quite strong connotation relating to physicality. Physical
systems can do things that seem more or less impossible to explain/calculate/describe, even in
principle. Turing machines are 
physically realizable in principle, but the converse (from physically realized 
devices to Turing machines) is problematic. This is because we cannot tell
precisely what the full components of an actual, physically realized device are.
It can be more or less impossible to somehow 
write down a precise Turing specification based on the physical construct. For example, \emph{in principle}, a
series of coin tosses could keep giving heads on precisely the rounds $j\in S\subseteq \mathbb{N}$,
where $S$ is undecidable.  Given a physically realized device, perhaps it is essentially a Turing machine, but the 
problem is that it is hard to know which one. Thus it may be more to the point to make the
hypothesis that Turing machines 
capture the notion of \emph{systematic} executability rather than mechanical. Nevertheless, it can of
course even be natural to make the \emph{hypothesis} that nature is a
essentially a Turing machine, but this does not imply that we understand, simply by looking at
physcal systems, what machine that systems should correspond to. This is especially true if we
cannot---and it seems we cannot---fully isolate the system from its environment.}
Now, $\mathbb{I}$ is precisely the set of mathematical statements, and we have a systematic and 
somehow objective procedure $\mathbb{P}$ for verifying truth\footnote{Indeed, one could claim
that $\mathbb{P}$ even \emph{defines} (or can be considered to define) which claims hold. And we will indeed 
consider the scenario where $\mathbb{P}$ determines truth.} of
the statements, but $\mathbb{P}$ does not have to
produce an output on every input, so $\mathbb{P}$ could
correspond to a Turing machine. \emph{Thus it is possible to consider
formulae of $\mathcal{L}$ to be $\mathbb{I}$}. 
For each input $\varphi\in\mathcal{L}$, we
check whether $\varphi$ is verified or falsified in
the empty model.\footnote{More rigorously, we consider the empty model in
the signature of $\varphi$.} It is of course possible that $\varphi$ is neither verified nor falsified.
The way the procedure $\mathbb{P}$ now works is, for its essential parts,
described in \cite{tc}, proof of Theorem 4.3.
The nice thing is that $\mathcal{L}$ is a
logic, so our inputs are statements rather than, e.g.,
binary strings.\footnote{The setting is, however, reasonably similar to
equating mathematical statements 
with Turing machines with the empty input.}

\medskip

\medskip

\noindent
Now, the setting is still quite restrictive, because $\mathcal{L}$ does not directly talk about, e.g., infinite sets.
Thus it will not be sensible to equate the setting with real mathematics.
But it \emph{can} be viewed as a possible formulation of the \emph{strict core} of
mathematics. The framework based on $\mathcal{L}$ is objective, finitary and \emph{captures} the notion of
systematicity. It indeed fully and formally
captures systematicity if we define systematicity according to the Church-Turing
thesis to correspond to Turing machines.
In a sense, systematicity is also precisely and exactly what \emph{logic} is all about, so it is possible to
entertain the view that $\mathcal{L}$ provides a definition of logicality.\footnote{We do not really
differentiate between logicality and mathematicality here, but instead identify both notions 
with the notion of rigorous objective systematicity.}

\medskip

\medskip

\noindent
Note especially that the perspective of using $\mathcal{L}$ to define 
mathematical statements banishes typical incompleteness issues. Every statement $\varphi\in\mathcal{L}$
corresponds to posing the questions  ``$\emptyset\models^+ \varphi$ ?'' and ``$\emptyset\models^- \varphi$ ?''.
The answer is given by $\mathbb{P}$.
If $\emptyset\models^+ \varphi$, then $\mathbb{P}$ outputs \emph{yes}, and if $\emptyset\models^- \varphi$,
then $\mathbb{P}$ outputs $\emph{no}$. If $\mathbb{P}$
diverges,\footnote{Here divergence is equated 
with $\emptyset \not\models^+\varphi$ and $\emptyset \not\models^-\varphi$, i.e., neither player
having a winning strategy.}
then $\mathbb{P}$ will not output anything. Indeed, we take $\mathbb{P}$ to 
\emph{define} truth and falsity here. Thus there are no true but not verifiable (or false but unfalsiable) staments.
The system \emph{defines} truth values, nothing external does. (It is worth noting that the statements 
where $\mathbb{P}$ diverges are simply considered not to have any truth value---unless 
divergence/indeterminacy is a truth value, which is then the truth value of those sentences.)

\medskip

\medskip

\noindent
In conclusion, $\mathcal{L}$ gives a possible, strict standard for strict mathematicity (or logicality, 
we do not differentiate here). The thesis that logicality equals systematicity
can be appealing, and if systematicity equals Turing executability (RE), then $\mathcal{L}$ hits some
fundamental mark. Thus it could be regarded as a fundamental logic. For those in favour of the
perspective that there is a 
unique fundamental logic, $\mathcal{L}$ could perhaps be one possible candidate. But of course this
requires one to favour (1) the uniqueness thesis; (2) the idea of logicality being systematicity; (3) 
the Church-Turing thesis that systematicity is captured by Turing machines;
and (4) the position that $\mathcal{L}$ should be a system
capturing Turing computation in some fundamentally natural way. The naturality could be
due to the links between $\mathcal{L}$ 
and natural language and the apparent minimality of $\mathcal{L}$ in achieving its central 
features such as the structure modification capacities, self-reference, and the containment of FO.
This can be quite a lot to entertain, but seems interesting nevertheless.

\subsection{Controlling systems with $\mathcal{L}$ and its variants}

\noindent
We have already noticed that the semantic games of $\mathcal{L}$ are system evolutions. This is
easy to see---and the system related to one evaluation task (i.e.,
checking, e.g., if $\mathfrak{M}\models^+\varphi$)
can be realized in many ways.
Roughly, we can take Eloise to be the sole agent and associate Abelard with $G$.\footnote{It 
can also be natural to associate different
operators ($\exists x_1$, $\exists x_2$, $\mathrm{I}_{R(x_1,\dots , x_n)}$ etc.)
with different agents. Here one may even want to take into account the polarity of the operator in
each turn, i.e., whether Eloise or Abelard uses the operator, and associate different 
modes of use with different agents. Also, a different occurrence (token) of
the same operator in the same formula can be associated with a different agent.}
The system constraint function $F$ relates to the constraints
given by the formula evaluated (and the semantic game rules together with the input models).
This is a turn-based game, so we need dummy moves. The current world is the current
model assignment pair, and we can put the remaining situation specification (where the
game is in relation to the evaluated
formula, and who is the current verifier) into mental models.\footnote{They can be made part of
the current model as well.}

\medskip

\medskip

\noindent
This phenomenon has a converse. A typical system controllable by Turing machines
can reasonably naturally be simulated relatively closely in a
setting with a semantic game of $\mathcal{L}$. The key feature of $\mathcal{L}$ is
the looping operator, allowing semantic games with indefinitely long plays. The other 
key feature is the possibility to modify models and thereby simulate phenomena relating to the structure 
evolution in the system modeled. Typically Eloise controls
the main agents for most parts. The choices of the agents can be represented by direct 
modifications of the current model. If necessary, we can add a some novel
points and a predicate $A$ that those points satisfy, and then choices can be represented by, e.g.,
colouring the elements in $A$ with different singleton predicates. \emph{Intuitively}, Abelard controls $G$ and 
modifies the structures so that the new current models become as desired. However, many kinds of 
modeling solutions can be made, based on what kinds of correspondences between the original game and 
the semantic game are desired. Note that there is no explicit winning notion in any way present in 
systems, while semantic games are reachability games. Nevertheless, Eloise and Abelard are free to make any
choices within the rules of the semantic game framework, and indeed, they are not
obliged to try to win the game plays.

\medskip

\medskip

\noindent
To allow more flexibility in model constructions with $\mathcal{L}$, we consider an
extension $\mathcal{L}[\, ;]$. The language is obtained by 
extending that of $\mathcal{L}$ by 
the formula construction rule stating that if $\psi$ and $\varphi$ are
formulae, then so is $\psi\, ;\varphi$. Intuitively, `$;$' is a composition operator stating that the 
left formula must first be checked and then the right one, but `$;$' also intuitively
corresponds to a conjunction. Formally, if we encounter $\psi\, ; \varphi$,
we first play the game for $\psi$ (with the model and assignment as they are when
the formula $\psi\, ;\varphi$ is encountered). The current verifier $P$ (Eloise or Abelard) of $\psi\, ; \varphi$
has the task to win the play of the game for $\psi$. If that play
for $\psi$ ends so that $P$ wins, the game play is
continued from $\varphi$ with the possibly
modified model and assignment\footnote{So the game for $\psi$ can of
course modify the model and assignment, and the game for $\varphi$ is 
then begun with these modified variants.} and with $P$ being the verifier for $\varphi$ (despite the 
possible changes of verifier in the play for $\psi$). 
If the play for $\psi$
does not end, then neither 
player wins the entire original play, and if the opponent of $P$ wins the play for $\psi$, then the 
original play ends with the opponent winning. Note that $\psi$ may have further symbols `$;$' and
jump symbols $C_i$ (so we may even, e.g., encounter the very same
formula $\psi\, ;\, \varphi$ again due to a jump). Each 
encounter of `$;$' has to be resolved first before earlier encounters (that have 
occurred earlier in the run of
the game play) will be returned to for checking the second formula $\varphi$.\footnote{But
jumps can of course cause all kinds of intermediate actions including possibly playing inside the 
syntactic structure of the second formula.} The full game
history can be used here to define strategies; in
standard $\mathcal{L}$ a positional 
history is the standard option. Nondeterministic strategies can be allowed, being winning if
every path leads to a win.

\medskip

\medskip

\noindent
This composition operator `$;$' 
does not affect the Turing-completeness, as the old proof applies for
translating from logic into Turing computation (and the direction from Turing
computation to logic does not
need the composition operator in the first place). The extended logic offers more
flexibility in simulating model constructions that Turing machines do. The key issue intuitively is
that we can write formulae where Eloise does not have to stop working on a model simply because
Abelalard doubts some true atom, thus forcing Eloise to win before the model
construction has a desired form. A result concerning model constructions follows.
Let us formulate that next.

\medskip

\medskip

\noindent
Let us write $\mathfrak{M}\models^+ (\varphi,\mathcal{M})$ if Eloise has a winning
strategy in the
game for $\varphi\in\mathcal{L}[\, ; ]$ and $\mathfrak{M}$ so that the set of
models where the plays end with that strategy is precisely $\mathcal{M}$. Note that different 
strategies can produce different sets.
Here we ignore all the symbols outside the signature (first-order variables and the relation symbols $X$
corresponding to tape symbols) so the signature of models in $\mathcal{M}$ is the same as
that of $\mathfrak{M}$. There may be other symbols present (e.g., $X$) in the final model constructions, but
they do not count.

\medskip

\medskip

\noindent
We say that a nondeterministic TM \emph{computes a $\sigma$-structure relation} if for
every finite $\sigma$-structure encoding $\mathit{enc}(\mathfrak{M})$ given as input, the
machine always halts (on every computation path)
and the collection of the outputs is $\mathit{enc}(\mathfrak{N}_1),\dots ,
\mathit{enc}(\mathfrak{N}_k)$ for some $\sigma$-structures $\mathfrak{N}_1,\dots ,
\mathfrak{N}_k.$ Furthermore, for every $\mathfrak{M}'$
isomorphic to $\mathfrak{M}$, the output collections with inputs $\mathit{enc}(\mathfrak{M})$
and $\mathit{enc}(\mathfrak{M}')$ contain representatives of precisely the same 
isomorphism classes.

\medskip

\medskip

\noindent
Now, for any TM computing a $\sigma$-structure relation,
there exists a formula $\varphi$ of $\mathcal{L}[\, ; ]$ such that the following
conditions both hold for all finite $\sigma$-structures $\mathfrak{M}$.

\begin{enumerate}
\item
TM produces, on the input $\mathit{enc}(\mathfrak{M})$, encodings of models that define
precisely the collection $\mathcal{K}_\mathfrak{M}$ of isomorphism classes.
\item
$\mathfrak{M}\models^+ (\varphi,\mathcal{M}_\mathfrak{M})$ with $\mathcal{M}_\mathfrak{M}$ defining
the collection $\mathcal{K}_\mathfrak{M}$ of isomorphism classes.
\end{enumerate}

\noindent
The way to establish this is to define a formula $\varphi$ that forces the
following construction. Beginning with the input model $\mathfrak{M}$, 
the formula $\varphi$
forces Eloise to construct the computation table with the 
input $\mathit{enc}(\mathfrak{M})$. The newly constructed model $\mathfrak{S}$
has the input model $\mathfrak{M}$,
the computation table consisting of word models, and one chosen
output model $\mathfrak{N}$. (The construction
allows all of the desired output models to be chosen, one output model at a time.) The 
computation table as well as  $\mathfrak{M}$ and the output model are coupled with suitable
auxiliary relations so that, e.g., the first and last bit strings in one branch of the
computation table represent the word 
models corresponding to $\mathit{enc}(\mathfrak{M})$ and $\mathit{enc}(\mathfrak{N})$.
Also other auxiliary constructs can be added, e.g., to distinguish the original elements of the input 
model from the newly added domain elements.
The computation table and the 
auxiliary relations can be constructed from
tape symbol predicates, but the output model $\mathfrak{N}$ 
consists of signature relations.
Once this construction $\mathfrak{S}$ is ready, a first-order formula $\psi$ is
evaluated. The first-order formula asserts that the computation table and the whole
construction is as desired.\footnote{Note that the empty set of output models does not feature here, and at 
least the empty model is outputted on the logic side anyway. Nevertheless, this could be remedied with
suitable and reasonably natural constructions.} Then the undesired part (i.e., everything but $\mathfrak{N}$) is
deleted and finally Eloise wins. All this can be done using the composition operator which prevents Abelard 
stopping Eloise before suitable construction and deletion parts are done.
Using the auxiliary tape symbol predicates helps to distinguish between different construction parts at
different stages of the full procedure.

\medskip

\medskip

\noindent
We note that of course the above construction generally requires looping.
And the construction remotely resembles the way the logic $\mathcal{L}_{\mathit{RE}}$ of \cite{okts}
achieves Turing-completeness; cf. Proposition 2 there. The logic $\mathcal{L}_{\mathit{RE}}$ is
based on an opertor $\mathrm{I} Y$ that adds a finite set of points to the model and labels 
the new points with the unary predicate $Y$.

\medskip

\medskip

\noindent
It is actually interesting to consider further operators and
extensions of that framework.
Firstly, one can allow (1) \emph{free use} of operators $\mathrm{I} Y$, rather than the
prenex restricted single operator allowed in $\mathcal{L}_{\mathit{RE}}$.
Furthermore, one can consider (2) an operator $D$ that simply deletes some  number of points from the 
model domain; (3) operators $(\mathrm{I} R)$ that introduce an arbitrary set of tuples to the relation $R$ within
the current model;
and (4) an operator $(\mathrm{D} R)$ that deletes an arbitrary set of tuples from the relation $R$.
We conjecture that
adding all these to second-order logic leads to a system that corresponds to the arithmetic hierarchy (in
expressivity) in the finite. Of course one needs to deal again with atoms with
first-order variables without a reference. A natural
approach here is to define all such atoms false, so atoms would basically state the additional
condition that the symbols in the atom have referents, resembling the Russelian approach.
This strong but seemingly simple system is very flexible in relation to model constructions, and
especially its fragments are immediately interesting.\footnote{We note that of course we can also
consider $\hat{D}Y$
which simply deletes the points in the extension of the unary relation $Y$ and an operator $I$ that simply
adds new domain points to the model without labeling them. Modifications can also be
restricted to the extension of an input formula. This defines variants of the operators we listed,
for example $\langle\mathrm{I}(\varphi(x,y),R)\rangle\chi$ would add all pairs that
satisfy $\varphi(x,y)$ to the binary 
relation $R$, and a similar deletion operation of course would also be natural. 
Concerning domain points, the formula $\langle D(\psi(x))\rangle \chi$ would
delete the points that satisfy $\psi(x)$. And the list goes on. After
the modification in each case, $\chi$ would be evaluated. To go more directly beyond first-order logic, one could 
consider variants where only some
nondeterministically chosen set of
tuples/points in the extensions could be added/deleted.}


\medskip

\medskip

\noindent
To define more custom-made logics than the very general logics
based on $\mathcal{L}$, let us turn to
fragments and further variants. To investigate model
transformations, let us define \emph{modifiers}. These can be used for jumps from models 
to other models, similarly to what happens in $\mathcal{L}$.
Modifiers are defined as follows.  Let $S$ be a class of pairs $(\mathfrak{M},X)$
where $\mathfrak{M}$ is a first-order structure and $X$ an assignment; $X$
can also be a team or a domain point,
depending on the exact application. Futhermore, to streamline our exposition, $(\mathfrak{M},X)$ can even 
represent a class of structures $(\mathfrak{N},f)$ where $\mathfrak{N}$ is a first-order
model and $f$ an assignment. Such classes (called model sets) are 
considered in \cite{a17k}.
Now, fix one of the above possible interpretations for structures $(\mathfrak{M},X)$.

\medskip

\medskip

\noindent
A modifier $m$ is a
map $$m: S\rightarrow \mathcal{P}(S)$$
such that if $(\mathfrak{M},X)\cong (\mathfrak{N},Y)$ for some two elements of $S$,
then there is a bijective map $p: m((\mathfrak{M},X))\rightarrow m((\mathfrak{N},Y))$
such that $(\mathfrak{A},U)\cong p((\mathfrak{A},U))$ for all inputs $(\mathfrak{A},U)$ to $p$.
Now, mixing syntax and
semantics, $(\mathfrak{M},X)\models (m)\varphi$ iff $(\mathfrak{N},Y)\models\varphi$ for 
all $(\mathfrak{N},Y)\in m((\mathfrak{M},X))$.

\medskip

\medskip

\noindent
Note that if $U$ and $V$ are teams, then $(\mathfrak{A},U)\cong (\mathfrak{B},V)$
if $(\mathfrak{A},\mathit{rel}(U))\cong(\mathfrak{B},\mathit{rel}(V))$
and $U$ and $V$ have the same domain. The relations of teams are determined in the usual way, using the ordering of the
subindices of the variable symbols to determine the internal ordering of tuples. If $U$ and $V$ are 
assignments, then they correspond to singleton teams, so the above specification suffices to define $\cong$.
If $U$ and $V$ are domain points, they correspond to singleton assigments, so again the case is covered.
In the case of model sets $T$ and $T'$, we have an isomorphism if there is a bijective map $g$ from $T$ to $T'$
such that $(\mathfrak{N},f)$ is isomophic to $g((\mathfrak{N},f))$ for
all inputs $(\mathfrak{N},f)$ to $g$.

\medskip

\medskip

\noindent
Modifiers are a simple way to jump from structures to others.
Altering things a bit, define $(\mathfrak{M},X)\models \langle m\rangle \varphi$
iff $(\mathfrak{N},Y)\models\varphi$ for 
some $(\mathfrak{N},Y)\in m((\mathfrak{M},X))$.
One can also consider  variants with, e.g., ``most.''  Going further, fix some
Boolean formula $B$ with $k$ proposition symbols.  Define
$(\mathfrak{M},X)\models (( F )) (\varphi_1,\dots, \varphi_k)$
if and only if the Boolean combination $B$ of
the statements $(\mathfrak{N}_i,Y_i)\models\varphi_i$ holds
for each tuple $$((\mathfrak{N}_1,Y_1),\dots , (\mathfrak{N}_k,Y_k))\
\in F((\mathfrak{M},X)),$$

\medskip

\medskip

\noindent
where $F$ maps from $S$ into $\mathcal{P}(S^k)$.
Here, if $(\mathfrak{M},X)\cong(\mathfrak{M}',X')$,
then there is a bijection $h:F((\mathfrak{M},X))\rightarrow F((\mathfrak{M}',X'))$
such that the $j$th entry of $((\mathfrak{N}_1,Y_1),\dots , (\mathfrak{N}_k,Y_k))$ is
isomorphic to the $j$th entry of $$h(((\mathfrak{N}_1,Y_1),\dots , (\mathfrak{N}_k,Y_k)))$$
(for all $j$ and all inputs to $h$).

\medskip

\medskip

\noindent
This last notion of a modifier is quite general, but it still modifies models 
independently of the input formulae $\varphi_1,\dots , \varphi_k$. And there are other
limitations.  Thus let us define a somewhat more general notion. Let $G$ be a
function from $S\times (\mathcal{P}(S))^k$ into $\mathcal{P}((\mathcal{P}(S))^{2k})$.
Intuitively, $G$ takes as input the current structure $(\mathfrak{M},X)$ and
the truth classes $$\parallel \varphi_i \parallel
\ =\ \{ (\mathfrak{N},Z) \in S\, |\, (\mathfrak{N},Z)\models \varphi_i\}.$$
The output of $G$, then, is a collection of $2k$-tuples $(\mathcal{R}_1^+,\mathcal{R}_1^-,\dots ,
\mathcal{R}_{k}^+, \mathcal{R}_{k}^-)$ of
classes $$\mathcal{R}_1^+,\mathcal{R}_1^-,\dots ,
\mathcal{R}_{k}^+, \mathcal{R}_{k}^- \subseteq S.$$ 
There should be at least one tuple so that
each model in $\mathcal{R}_{i}^+$ satisfies $\varphi_i$ while no
model in $ \mathcal{R}_{i}^-$ satisfies $\varphi_i$. Intuitively, each tuple approximates a
\emph{type} consisting of simple satisfaction and negative satisfaction statements. Thus the
collection that $G$ outputs corresponds to a \emph{disjunction of types} (or
type approximations).
So, we indeed define $(\mathfrak{M},X)\models (( G )) (\varphi_1,\dots, \varphi_k)$
if and only if for some $t := (\mathcal{R}_1^+,\mathcal{R}_1^-,\dots ,
\mathcal{R}_{k}^+, \mathcal{R}_{k}^-)$ in $$G((\mathfrak{M},X),\parallel
\varphi_1\parallel, \dots , \parallel \varphi_k \parallel),$$
we have $(\mathfrak{M}_i,Y_i)\models\varphi_i$ for all $(\mathfrak{M}_i,Y_i)\in\mathcal{R}_i^+$
and $(\mathfrak{M}_i',Y_i')\not\models\varphi_i$ for all $(\mathfrak{M}_i',Y_i')\in\mathcal{R}_i^-$
(for all $i\leq k$ obviously).

\medskip

\medskip

\noindent
Again of course some invariance conditions hold for $G$.
We require that if $(\mathfrak{M},X)\cong(\mathfrak{M}',X')$,
then there exists a bijection
$$h:G((\mathfrak{M},X),\parallel \varphi_1\parallel, \dots , \parallel \varphi_k \parallel)
\rightarrow G((\mathfrak{M}',X'),
\parallel \varphi_1\parallel, \dots , \parallel \varphi_k \parallel)$$
such that for all tuples $$t\in G\bigl((\mathfrak{M},X),\parallel
\varphi_1\parallel, \dots , \parallel \varphi_k \parallel \bigr),$$
and all $j\leq 2k$, there exists a bijection $p$ from the $j$th entry of $t$ to
the $j$th entry of $h(t)$ so that each model $(\mathfrak{N},Y)$ in the domain of $p$ is
isomorphic to $p((\mathfrak{N},Y))$.

\medskip

\medskip

\noindent
Such modifiers $G$ are
reasonably general, so we call such modifiers \emph{g-modifiers}, g for general.
Other kinds of invariance conditions could be natural, but we shall not address 
that issue here further than a short comment later on below.

\medskip

\medskip

\noindent
We consider one more notion of very closely related operators.
We define \emph{operators} $N$
such that $$(\mathfrak{M},X)\models \langle\langle H\rangle\rangle (\varphi_1,\dots, \varphi_k)$$ if
and only if $\bigl( \parallel
\varphi_1\parallel, \dots , \parallel \varphi_k \parallel \bigr)\ \in\ N\bigl( (\mathfrak{M},X) \bigr)$.
Here $N$ maps from $S$ into $\mathcal{P}((\mathcal{P}(S))^k)$. A natural invariance 
here is that if $(\mathfrak{M},X)$ and $(\mathfrak{M}',X')$ are isomorphic,
then there is a bijection $h$ from $N(\mathfrak{M},X)$ to $N(\mathfrak{M}',X')$
such that for all $t\in N((\mathfrak{M},X)),$
the $j$th entry of $t$ and the $j$th
entry of $h(t)$ are related by a bijection $p$ such 
that the structures linked via $p$ are isomorphic. This resembles neighbourhood semantics. We note
that it is of course not necessary to impose the above invariance conditions if not desired, but
that may result in isomorphic structures $(\mathfrak{M},X)$ and $(\mathfrak{M}',X')$
being non-equivalent.

\medskip

\medskip

\noindent
We note that the above classes of logical constructs (i.e., operators and modifiers)  can be
equivalently formulated in terms of functions on isomorphism classes.
We say that $F$ and $G$ are \emph{i-similar} if they are the same modulo modifying the involved
structures to 
isomorphic ones. 
It is also worth noting that natural construct classes arise, e.g., by considering
which constructs are the same modulo permutation of
variables (i.e., a bijection from 
the set VAR of all variable symbols to VAR itself). $F$ and $G$ are \emph{x-similar} if they are the same
modulo variable permutations. $F$ and $G$ are \emph{similar} if they are the same
modulo modifying the involved structures to 
isomorphic ones and permuting variables. All the three notions of similarity define relevant 
classes of logical constructs that are, in some senses, the same, or have the same base construct.

\medskip

\medskip

\noindent
Classifying modifiers via $\mathcal{L}$ and $\mathcal{L}[\, ; ]$ can be
elucidating and useful, as those logics 
modify models one tuple and one domain point at a time. Already the self-reference-free
fragment of $\mathcal{L}$ defines a natural and highly comprehensive class of simple, finite step
modification procedures. 
Nevertheless, modifiers can be natural in many contexts.
For example, for more flexible studies of formal systems (as
defined above), it would be useful to investigate extensions of $\mathcal{L}$ and its
variants with more
custom-made model modification steps.

\medskip

\medskip

\noindent
Let us then move on to briefly discuss model sets. The discussion will relate to
knowledge representation and the issues in Section \ref{falsepartial}. 
Suppose we consider systems
where the mental model is the conceived set of possible current models.
This is a classical approach. Now, for those systems, we 
can directly use model sets \cite{a17k}. 
Using the team semantics of \cite{a17k} on a first-order
formula $\varphi$, we have $$\mathcal{M}\models \varphi
\text{ if and only if }\mathfrak{M},f\models_{\mathrm{FO}}\varphi
\text{ for all }(\mathfrak{M},f)\in \mathcal{M},$$
where $\mathcal{M}$ is a model set, i.e., a collection of pairs $(\mathfrak{M},f)$
where $f$ is an assignment. As established in \cite{a17k}, 
this variant of team semantics gives essentially a semantics for proofs. Disjunction 
corresponds to splitting into cases and negation to going from verification to falsification.
In Section \ref{falsepartial} we discussed the possibility of using truncated 
reasoning when determining the output of the decision function $d_i$. A semantics for proofs can be 
useful in this context, as typical proof steps---such as splitting into cases---are
reflected in the semantics. It is
interesting, e.g., to consider what can be established with a strongly limited number of such
semantic counterparts of proof steps.
All this directly relates to issues in knowledge representation. 
Indeed, consider querying under the open world setting. It is all about \emph{dealing with very delicate
consequence relations}. Let us see an example of open world querying and relate it to model sets.

\medskip

\medskip

\noindent
Let $(\sigma,\mathcal{O},q(\overline{x}))$ be an ontology-mediated query (see \cite{obda}).
Here we define $\mathcal{O}$ to be an ontology (possibly in some strong logic), $\sigma$ a
signature and $q(\overline{x})$ a
query over $\sigma\cup\mathit{signature}(\mathcal{O})$.
Let $\mathcal{F}$ be a $\sigma$-database, i.e., a
set of literals\footnote{Here we allow positive \emph{and}
negative relational facts} over the signature $\sigma$.
Let $\overline{a}$ be a tuple of elements occurring in $\mathcal{F}$.
We here define that $\mathcal{F} \models ( \sigma, \mathcal{O},  q(\overline{a}))$ if
and only if $\mathfrak{M}\models q(\overline{a})$ for all
models $\mathfrak{M}$ of the
signature $\sigma\cup\mathit{signature}(\mathcal{O})$
such that

\begin{enumerate}
\item
$\mathfrak{M}\models\bigwedge \mathcal{O}$
\item
The diagram of $\mathfrak{M}$ contains $\mathcal{F}$ as a subset.
\end{enumerate}

\noindent
Let $\mathcal{M}[\sigma,\mathcal{O},\mathcal{F},\overline{x}\mapsto\overline{a}]$
denote this
model set (defined by the above two conditions),
with every assignment mapping the elements of $\overline{x}$ to
the respective elements of $\overline{a}$. Then we have $\mathcal{M}[\sigma,\mathcal{O},\mathcal{F},
\overline{x}\mapsto\overline{a}]\models q(\overline{x})$ if
and only if $\mathcal{F} \models ( \mathcal{O}, \sigma, q(\overline{a}))$.
Thus we can turn the logical consequence issue into model set satisfaction, which uses
the team semantics of model sets. (This obvious connection of ontology-based
data access to model sets was briefly noted in \cite{a19k}.)
As already discussed, all this can be useful when considering different  
reasoning notions with limited reasoning capacities and truncated reasoning patterns.
Note that, if desired, we can put even quite severe cardinality limits to 
the models in the model set. And we can stay in the finite if we want to. If we want more 
complex data than literals, an approach via model sets can still be used. It is simply about
delicate consequence relations, and model sets relate directly to those.

\medskip

\medskip

\noindent
To seriously study delicate consequence relations used in knowledge representation, one
must understand very delicate fragments of FO and $\mathcal{L}$, as this 
helps in various kinds of classification attempts. For example,
antecedent formulae could be only atoms, while consequent formulae are more elaborate. For 
such studies, we need tools for flexible, fine-grained classification.
To classify logics in a flexible, delicate and 
\emph{very fine-grained way}, it would be beneficial to have access to related 
algebraic approaches. These are not difficult to obtain.
In the next section we 
take some related first steps.

\subsubsection{First-order logic and extensions via functions on relations}\label{randomsection}

Here we define an algebraic approach to first-order logic. The system resembles the 
approach of Codd, but employs a finite signature and considers 
standard first-order logic It is also very close to predicate functor logic, but we
also consider some extensions of the system below.

\medskip

\medskip

\noindent
The key is to deal with identities and 
relation permutations by operations that operate only in
the beginning of tuples. We can arbitrarily permute any tuple by combining
swaps of the first two coordinates with a
cyclic permutation operation. Furthermore, we can identify (i.e., force equal) any two tuple
elements by first bringing the elements to the beginning of a tuple and then applying an identity 
operation that checks only the first two coordinates.
What we formally mean by these claims will be of course made clear below.

\medskip

\medskip

\noindent
A first-order formula $\varphi(x_1,\dots , x_k)$ defines the 
relation $$\{(a_1,\dots, a_k)\in A^k\, |\, \mathfrak{A}\models\varphi(a_1,\dots , a_k)\, \}$$
over the model $\mathfrak{A}$ (where $A$ is the domain of $\mathfrak{A}$).
This requires that the free variable symbols $x_i$ 
are linearly ordered. Here we let the linear ordering be associated with the 
subindices of the variable symbols. Now, what would be the relation 
defined by the formula $\varphi(x_2,\dots , x_{k+1})$? It would be 
natural to let it be $$\{(a_2,\dots, a_{k+1})\in A^k\, |\, \mathfrak{A}\models\varphi(a_2,\dots , a_{k+1})\, \}.$$
This is precisely the same relation as the relation 
given by $\varphi(x_1,\dots, x_k)$ because we obviously have
$$\{(a_2,\dots, a_{k+1})\, |\, \mathfrak{A}\models\varphi(a_2,\dots , a_{k+1})\, \}
= \{(a_1,\dots, a_{k})\, |\, \mathfrak{A}\models\varphi(a_1,\dots , a_{k})\, \}.$$
One way around this is to let formulae define sets of
assignment functions, i.e., the ``relation'' defined by $\varphi(x_1,\dots , x_k)$
over $\mathfrak{A}$ is now $$\{\bigl((x_1,a_1),\dots, (x_k,a_{k})\bigr)\, |\, 
\mathfrak{A}\models\varphi(a_1,\dots , a_{k})\, \}.$$
And the ``relation'' defined by $\varphi(x_2,\dots , x_{k+1})$
over $\mathfrak{A}$ is $$\{\bigl((x_2,a_2),\dots, (x_{k+1},a_{k+1})\bigr)\, |\, 
\mathfrak{A}\models\varphi(a_2,\dots , a_{k+1})\, \}.$$
So, in some sense, first-order formulae do not really define relations over
models but sets of assignments (which could be characterized as \emph{index labeled 
relations}.)\footnote{It is worth noting that relational database theory is not based on relations but
these kinds of labeled relations.}

\medskip

\medskip

\noindent
Now, we shall here work with relations, not sets of assignments. The relation 
defined by a first-order formula $\varphi$ in a model $\mathfrak{A}$ is, strictly speaking,
specified as follows.
\begin{enumerate}
\item
Let $(x_{i_1},\dots, x_{i_k})$ enumerate exactly all the free variables in $\varphi$, with
the subindices $i_1,\dots , i_k$ given in a strictly increasing order. 
\item
Then the relation defined by $\varphi$ is given by
$$\{ (a_1,\dots , a_k)\in A^k\, |\, \mathfrak{A}\models \varphi(a_1,\dots , a_k)\, \}.$$
\end{enumerate}
Therefore, the relations defined by the (strictly speaking non-equivalent)
formulae $\varphi(x_1,\dots , x_k)$ and $\varphi(x_2,\dots , x_{k+1})$
will be the same. Note that the relation defined by a \emph{sentence} $\varphi$
such that $\mathfrak{A}\models\varphi$ is
the nullary relation $\{\emptyset\}$
where $\emptyset$ represents the empty tuple.
The relation defined by a sentence $\chi$
such that $\mathfrak{A}\not\models\chi$ is the 
nullary empty relation.
We suppose there is a \emph{different empty relation for each arity}, starting 
with arity zero. This way the complement of the completement of
the total $k$-ary relation is the total $k$-ary relation itself.
We lose no information about the arity.
We also assume that models must have a non-empty domain.

\medskip

\medskip

\noindent
We will next define functions that map relations in $\mathfrak{A}$
to relations in $\mathfrak{A}$. We will then show that this approach defines exactly 
the same relations as first-order logic.

\subsubsection{Functions on relations}\label{algebra}

Consider the algebraic signature $(u,I,\neg, p,s,\exists,J)$ where
\begin{enumerate}
\item
$u$ is a
nullary symbol,\footnote{Recall that nullary function symbols in an
algebraic signature represent constants.}
\item
$I,\neg,p,s,\exists$
have arity one,
\item
$J$ has arity two.
\end{enumerate}
To consider models with relation
symbols $R_1,\dots, R_k$, add $R_1,\dots, R_k$ to be nullary symbols in
the algebraic signature, just like $u$. Terms are
built from variable symbols and the constants (nullary symbols $u,R_1,\dots , R_k$)
using the symbols $I,\neg, p,s,\exists,J$ in
the usual way to compose new terms. Below we will consider only terms without
variable symbols and use the word ``term'' to refer to such terms.

\medskip

\medskip

\noindent
Given a model $\mathfrak{A}$, every term $\mathcal{T}$ defines
some relation $\mathcal{T}^{\mathfrak{A}} \subseteq A^k$ where $A$ is 
the domain of $\mathfrak{A}$. Let us look at the semantics of terms.
Let $\mathcal{T}$ be a term and suppose we have defined a relation $\mathcal{T}^{\mathfrak{A}}$.
Then the following conditions hold.

\medskip

\medskip

\noindent
\begin{enumerate}
\item[$R_i$\, )] Here $R_i$ is a $k$-ary relation symbol in the signature of $\mathfrak{A}$, 
which is a nullary term in the algebraic signature. The nullary term is
interpreted to be the relation $R^{\mathfrak{A}}$, i.e., the relation 
$$\{ (a_1,\dots, a_k)\, |\, \mathfrak{A}\models R(a_1,\dots, a_k)\, \}.$$
This is natural indeed.\footnote{Note here that if $R_i$ is a nullary relation symbol, $R^{\mathfrak{A}}$ is
either $\{\emptyset \}$ or $\emptyset_0$ corresponding to true and false, respectively.
Here $\emptyset_0$ is the nullary 
empty relation. The empty tuple is identified with $\emptyset$ in $\{\emptyset\}$.}
\item[$u$\, )] We define $u^{\mathfrak{A}} = A$. The constant $u$ is
referred to as the \emph{universal unary relation} constant.
\item[$I$\, )]
If $\mathcal{T}^\mathfrak{A}$ is of
arity at least two, we define
$$(\mathit{I}(\mathcal{T}))^{\mathfrak{A}}
= \{(a_1, \dots , a_k)\, |\, (a_1,\dots , a_k)\in\mathcal{T}^\mathfrak{A}
\text{ and }a_1=a_2\}.$$
If $\mathcal{T}^\mathfrak{A}$ is a
unary or a nullary relation, we define $(\mathit{I}(\mathcal{T}))^\mathfrak{A}
= \mathcal{T}^{\mathfrak{A}}$. The function $I$ is referred to as the \emph{identity operator}, or
equality operator.
\item[$\neg$\, )]
We define $$(\neg(\mathcal{T}))^{\mathfrak{A}}
= \{(a_1, \dots , a_{k})\, |\, (a_1,\dots , a_k)
\in A^k\setminus \mathcal{T}^\mathfrak{A}\, \},$$
where we recall that $A^0 = \{\emptyset \}$ in the
case where $k$ is zero.\footnote{When $k=0$,
the tuple $(a_1,\dots , a_k)$ represents the empty tuple $\emptyset$.}
The operator $\neg$ is referred to as the 
\emph{negation} operator or \emph{complementation} operator.
Recall that the empty relation is different for each arity.
\item[$p$\, )]
If $\mathcal{T}^\mathfrak{A}$ is of arity at least two, we define
\noindent
$$(\mathit{p}(\mathcal{T}))^{\mathfrak{A}}
= \{(a_2 , \dots , a_{k},a_1)\, |\, (a_1,\dots , a_k)\in\mathcal{T}^\mathfrak{A}\, \},$$
where the $k$-tuple $(a_2 , \dots , a_{k},a_1)$ is the one
obtained from the $k$-tuple $(a_1,\dots , a_k)$ by simply moving the first element $a_1$ to
the end of the tuple.
If $\mathcal{T}^\mathfrak{A}$ is a
unary or a nullary relation, we define $(\mathit{p}(\mathcal{T}))^\mathfrak{A}
= \mathcal{T}^{\mathfrak{A}}$. The function $p$ is referred to as the $\emph{permutation}$ 
operator, or $\emph{cyclic permutation}$ operator.
\item[$s$\, )]
If $\mathcal{T}^\mathfrak{A}$ is of arity at least two, we define
\noindent
$$(\mathit{s}(\mathcal{T}))^{\mathfrak{A}}
= \{(a_2, a_1, a_3,\dots , a_{k})\, |\, (a_1,\dots , a_k)\in\mathcal{T}^\mathfrak{A}\, \},$$
where the $k$-tuple $(a_2, a_1, a_3,\dots , a_{k})$ is the one
obtained from the $k$-tuple $(a_1,\dots , a_k)$ by swapping the 
first two elements $a_1$ and $a_2$ and keeping the other elements as they are.
If $\mathcal{T}^\mathfrak{A}$ is a
unary or a nullary relation, we define $(\mathit{s}(\mathcal{T}))^\mathfrak{A}
= \mathcal{T}^{\mathfrak{A}}$. The function $s$ is referred to as the \emph{swap} operator.
\item[$\exists$\, )]
If $\mathcal{T}^{\mathfrak{A}}$ has arity at least one, we define
$$(\exists(\mathcal{T}))^{\mathfrak{A}}
= \{(a_2, \dots , a_{k})\, |\, (a_1,\dots , a_k)\in\mathcal{T}^\mathfrak{A}
\text{ for some }a_1\in A\, \},$$
where $(a_2,\dots , a_k)$ is the $(k-1)$-tuple
obtained by removing the first element of the tuple $(a_1,\dots , a_k)$.
When $\mathcal{T}^{\mathfrak{A}}$ is a
nullary relation, we define $(\exists(\mathcal{T}))^{\mathfrak{A}} = \mathcal{T}^{\mathfrak{A}}$.
The function $\exists$ is referred to as the \emph{existence} operator.
\item[$J$\, )]
We define
\begin{multline*}
(J(\mathcal{T},\mathcal{S}))^{\mathfrak{A}} =\\
 \{(a_1,\dots , a_k, b_1 , \dots , b_{\ell})\, |\, (a_1,\dots , a_k)
\in \mathcal{T}^\mathfrak{A}
\text{ and }\, (b_1,\dots , b_{\ell})
\in \mathcal{S}^\mathfrak{A}\}.
\end{multline*}
Here we note that if $k=0$ and thus $(a_1,\dots , a_k) = \emptyset$ (the empty tuple),
then $(a_1,\dots , a_k, b_1 , \dots , b_{\ell})$
represents the tuple $(b_1 , \dots , b_{\ell})$.
Similarly, if $\ell = 0$, then $(b_1,\dots , b_{\ell}) = \emptyset$
and $(a_1,\dots , a_k, b_1 , \dots , b_{\ell})$ represents $(a_1,\dots , a_k)$.
When both $k$ and $\ell$ are zero, $(a_1,\dots , a_k, b_1 , \dots , b_{\ell})$
represents the empty tuple $\emptyset$. The function $J$ is referred to as the \emph{join} operator.
\end{enumerate}

\medskip

\medskip

\noindent
The terms that can be formed using the above 
symbols will be called \emph{$l$-terms} ($l$ for logic).
If $\varphi$ and an $l$-term define exactly the same
relation over every model $\mathfrak{A}$ (in a signature
interpreting the required symbols), then $\varphi$ and the $l$-term
are called \emph{$l$-equivalent}.

\medskip

\medskip

\noindent
The following theorem bears some similarity to Codd's theorem. However, we discuss
standard first-order logic and have a somewhat different 
operator set (and we concentrate on relations rather than
sets of assignments). Our signature is finite (provided 
that there are only finitely many relation symbols $R_i$ in the signature of
the models $\mathfrak{A}$ considered).

\medskip

\medskip

\begin{theorem}
For every first-order formula $\varphi$, there exists an $l$-equivalent $l$-term.
Vice versa, for every $l$-term, there exists an $l$-equivalent first-order formula.
\end{theorem}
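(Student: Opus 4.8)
The plan is to prove both directions by structural induction, and essentially all of the technical work sits in the direction from a first-order formula to an $l$-term. For that direction I would first strengthen the statement into an induction hypothesis that records free variables explicitly: for every first-order formula $\varphi$ whose free variables, listed in increasing subindex order, are $(x_{i_1},\dots,x_{i_k})$, there is an $l$-term $\mathcal{T}_\varphi$ with $\mathcal{T}_\varphi^{\mathfrak{A}}=\{(a_1,\dots,a_k)\mid \mathfrak{A}\models\varphi[x_{i_1}\mapsto a_1,\dots,x_{i_k}\mapsto a_k]\}$ for every model $\mathfrak{A}$ interpreting the signature. The base cases are the atoms. An equality $x_i=x_i$ becomes $u$; an equality $x_i=x_j$ with $i<j$ becomes $I(J(u,u))$; and a relational atom $R(x_{j_1},\dots,x_{j_n})$ is built from the nullary term $R$ by (i) using the cyclic permutation $p$ and the swap $s$ (which, as noted just before the theorem, generate all coordinate permutations) together with $I$ to force equal exactly those coordinates that carry the same variable, (ii) using $\exists$ — after moving the coordinate to the front with $p,s$ — to delete the redundant copies, and (iii) permuting the surviving coordinates into increasing-subindex order. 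The inductive case of negation is handled directly by the operator $\neg$, since $\neg\varphi$ has the same free variables as $\varphi$, and the convention of a distinct empty relation for each arity together with the non-empty-domain assumption ensures $\neg$ computes precisely the complement within $A^k$ (including the nullary case, where $\{\emptyset\}$ and $\emptyset_0$ swap). Disjunction and the universal quantifier are obtained from $\neg,\wedge,\exists$ via the usual De Morgan and quantifier dualities, which respect free-variable sets.

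The two remaining inductive cases carry the bookkeeping, and I expect the conjunction case to be the main obstacle. For $\exists x_j\,\varphi$: if $x_j$ is not free in $\varphi$ then $\exists x_j\,\varphi$ is logically equivalent to $\varphi$ — this is exactly where the standing assumption that every model has non-empty domain is used — and we reuse $\mathcal{T}_\varphi$; if $x_j$ is free, bring its coordinate to the front with the permutation operators, apply $\exists$, and re-sort the surviving coordinates. For $\varphi\wedge\psi$: let $\mathcal{T}_\varphi$ have arity $k$ with coordinates indexed by the sorted tuple $\bar u$ of free variables of $\varphi$, and $\mathcal{T}_\psi$ have arity $\ell$ indexed by $\bar v$; form $J(\mathcal{T}_\varphi,\mathcal{T}_\psi)$, whose $k+\ell$ coordinates are indexed by the concatenation of $\bar u$ and $\bar v$, so each variable in $\bar u\cap\bar v$ labels two coordinates. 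For each such repeated variable, use $p$ and $s$ to move its two coordinates to positions $1$ and $2$, apply $I$ to force them equal, move one of them to position $1$ with $s$, and apply $\exists$ to delete that copy; repeat until no variable is repeated, then permute the remaining $|\bar u\cup\bar v|$ coordinates into increasing-subindex order. Proving that the resulting term defines $\{(a_1,\dots)\mid\mathfrak{A}\models(\varphi\wedge\psi)[\dots]\}$ is the careful-but-routine part: one verifies that each $I$-then-$\exists$ block corresponds exactly to the semantic fact that a shared free variable must receive a single common value in both conjuncts.

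For the converse direction I would, by induction on the structure of an $l$-term $\mathcal{T}$ of arity $k$, construct a first-order formula $\varphi_{\mathcal{T}}$ with free variables exactly $x_1,\dots,x_k$ — padding with conjuncts $x_i=x_i$ when a coordinate is otherwise unconstrained — such that $\varphi_{\mathcal{T}}$ defines $\mathcal{T}^{\mathfrak{A}}$ over every $\mathfrak{A}$. Here $u$ maps to $x_1=x_1$; a nullary symbol $R_i$ of arity $n$ maps to $R_i(x_1,\dots,x_n)$; $I(\mathcal{T})$ maps to $\varphi_{\mathcal{T}}\wedge x_1=x_2$ in the arity-$\geq 2$ case and to $\varphi_{\mathcal{T}}$ otherwise; $\neg(\mathcal{T})$ maps to $\neg\varphi_{\mathcal{T}}$ with $\bigwedge_{i\leq k}x_i=x_i$ adjoined to preserve the free-variable set; $p$ and $s$ map to the corresponding permutations of the argument list (with a standard renaming of bound variables to avoid capture); $\exists(\mathcal{T})$ maps to $\exists y\,\varphi_{\mathcal{T}}(y,x_1,\dots,x_{k-1})$ for a fresh $y$; and $J(\mathcal{T},\mathcal{S})$ maps to $\varphi_{\mathcal{T}}(x_1,\dots,x_k)\wedge\varphi_{\mathcal{S}}'(x_{k+1},\dots,x_{k+\ell})$, where $\varphi_{\mathcal{S}}'$ is $\varphi_{\mathcal{S}}$ with each $x_i$ renamed to $x_{k+i}$. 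The nullary cases, arising from terms built over sentences, are checked directly against the convention that $\{\emptyset\}$ and the nullary empty relation encode truth and falsity. This direction presents no genuine obstacle beyond disciplined variable renaming, so the bulk of the writing effort goes into the conjunction step of the first direction.
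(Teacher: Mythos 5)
Your proposal is correct and follows essentially the same route as the paper's proof: reducing atoms and conjunctions to sorted, repetition-free form by generating all permutations from $p$ and $s$, identifying repeated variables with $I$ and projecting them out with $\exists$, handling conjunction via $J$ plus the same clean-up, and translating $\exists x_i$ by the permute--project--permute pattern. You merely spell out details the paper leaves implicit (explicit free-variable bookkeeping, the vacuous-quantifier case, and the term-to-formula direction, which the paper dismisses as straightforward), so there is no substantive difference in approach.
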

\begin{proof}
Let $\varphi$ be a first-order formula. We need to find the 
corresponding algebraic term. If $\varphi$ is $\top$, the corresponding term is $\exists u$,
and if $\varphi$ is $\bot$, the term is $\neg \exists u$. If $\varphi$ is 
some formula $x=x$, then the term $u$ will do. If $\varphi$ is a formula $x = y$,
then the corresponding term is $I(J(u,u))$.

\medskip

\medskip

\noindent
Now suppose $\varphi$ is $R(x_{i_1},\dots , x_{i_k})$, where $k\geq 0$. Assume 
first that no variable symbol in the tuple $(x_{i_1},\dots , x_{i_k})$ gets repeated.\footnote{Thus for
example $R(x_1,x_1,x_2)$ is not allowed in this case as it repeats $x_1$.} Assume
also that the (subindices of the)
variable symbols in $(x_{i_1},\dots , x_{i_k})$ are linearly ordered (i.e.,
strictly increasing) from left to right. Then the
term corresponding to $\varphi$ is simply $R$.

\medskip

\medskip

\noindent
Consider then the cases where the
tuple $(x_{i_1},\dots , x_{i_k})$ in $R(x_{i_1},\dots , x_{i_k})$ may
contain repetitions and the variables may not
necessarily be in an increasing order. Firstly, note that we can permute any relation arbitrarily by
using the operators $p$ and $s$. To see this, 
note the following two facts.

\begin{enumerate}
\item
In a tuple $(a_1,\dots, a_i, \dots , a_{\ell})$, we can move the element $a_i$ any
number $n$ of steps to the right---keeping the tuple otherwise in the same order---as follows.
\begin{enumerate}
\item
Apply $p$ repeatedly so that $a_i$ becomes the leftmost element.
\item
Apply the composed function $ps$ (meaning ``$s$ 
first and then $p$'') exactly $n$ times.
\item
Repeatedly apply $p$ until the tuple is in
the desired configuration.
\end{enumerate}
\item
Moving $a_i$ to the left is no different from moving it to the right, as moving to the 
left corresponds to moving to the right and past the end of the tuple. Thus 
moving $n$ steps to the left is achieved by the above steps a,b,c, with
the combined function $ps$ applied exactly $\ell - n - 1$ times in step b.
\end{enumerate}

\noindent
Thus we can move a single element anywhere, keeping the rest of the tuple in order.
Thereby we can, one by one, move elements where we like, and thus all permutations can indeed be
achieved using $s$ and $p$ only.

\medskip

\medskip

\noindent
Notice then that since we can permute relations arbitrarily, also repetitions of variables can be
dealt with. The idea is simply to bring element pairs to the left end of tuples, after which we
can use the indentity operator $I$ to
get rid of tuples without the desired repetition. For example, it is easy to see that $R(x_2,x_1,x_2)$  
corresponds to the term $p\, \exists\, Ip\, p(R)$. It is easy to see how to systematically 
produce translations of all atoms by using 
combinations of $p$, $s$, $\exists$ and $I$.

\medskip

\medskip

\noindent
To translate a conjunction, suppose by induction that we have
translations $\mathcal{T}(\psi)$ and $\mathcal{T}(\chi)$ for $\psi$
and $\chi$. Let $\psi\wedge\chi$ be the formula $\varphi$ to be translated.
Now, $J(\mathcal{T}(\psi),\mathcal{T}(\chi))$ is almost what we need. The only 
thing we need additionally to take into account is the possibility of having repeated 
symbols that occur in both $\psi$ and $\chi$ and also the ultimate order of the 
variable symbols. Thus, similarly to the case for
atoms, we apply $p$, $s$, $\exists$ and $I$ (often repeatedly) to the
term $J(\mathcal{T}(\psi),\mathcal{T}(\chi))$ to get the required term.

\medskip

\medskip

\noindent
Translating a negation is trivial, we translate $\neg\psi$ to
the term $\neg(\mathcal{T}(\psi))$ where $\mathcal{T}(\psi)$ is obtained from the induction hypothesis.
Translating a quantifier $\exists x_i$ is similarly easy. However, we may first have to do some 
preprocessing  as the variable $x_i$ can refer to some other than the first position in the relation 
corresponding to the quantified formula. Thus,
suppose we want to translate $\exists x_i\psi$ and we have a
translation $\mathcal{T}(\psi)$ of $\psi$ by the induction hypothesis. Now use $p$ (typically
repeatedly) to make
the coordinate corresponding to $x_i$ the leftmost coordinate in the relation tuples, obtaining a
term $p^n(\mathcal{T}(\psi))$, where $n$ denotes how many times $p$ was repeated.
Then use $\exists$ and use $p$ again (typically repeatedly) to put the remaining tuple into
the right order. Thus the ultimate term is of type $p^m\exists\, p^n(\mathcal{T}(\psi))$.
%
%
%
%
%
%

\medskip

\medskip

\noindent
The direction from terms to first-order logic is straightforward.
\end{proof}

\medskip

\medskip

\noindent
This representation of first-order logic can be used to obtain very fine-grained 
classifications of first-order fragments. Thus it can be a fruitful starting point for
novel classifications of different decidability and complexity issues 
of first-order fragments.\footnote{Of course complexities
can vary based on which formalism is used. One can use the algebraic formalism for
classifying first-order fragments for sure, but one can also directly use and study the
algebraic formalism itself
and complexity issues within it.}
For example, it seems plausible to expect that some fragments with fluted-logic-like
properties can be obtained via dropping the swap
operator $s$. Anyway, there are many ways to directly apply the framework, and it
should be a nice and useful setting for building
decidability and complexity classifications based on simply and clearly specified  fine-grained
classifications of syntax.

\medskip

\medskip

\noindent
The algebraic approach generalizes to second-order logic quite directly. There we 
can make use of relations whose tuples have individuals and relations. We leave this for later.
Instead, let us define a general operator within this framework. It is now particularly easy as the 
algebraic framework is indeed very simple.
So, let $F$ be a map that takes as input any tuple $(M,R_1,\dots , R_k)$ where $M$ is a set
and each $R_i$ a
relation over $M$. Note that the collection of inputs is indeed all
tuples $(M,R_1,\dots , R_k)$ so the relations $R_i$ can have
any sequence of arities for different inputs (but $k$ is fixed).
The output $F((M,R_1,\dots , R_k))$ is then some relation over $M$.
The invariance condition is simply that if $(M,R_1,\dots , R_k)$ and $(M',R_1',\dots , R_k')$
are isomorphic\footnote{We note that there is no vocabulary here, as the 
relations here are indeed plain relations not directly associated with any
relation symbol.} via $f: M \rightarrow M'$,
then so are $(M,F((M,R_1,\dots , R_k)))$ and $(M,F((M',R_1',\dots , R_k')))$ (also via $f$).
We now add terms $F(t_1,\dots , t_k)$ to the picture, and the semantics 
obviously defines $(F(t_1,\dots , t_k))^\mathfrak{M}$ to
be $F(t_1^{\mathfrak{M}},\dots , t_k^{\mathfrak{M}})$.

\bibliographystyle{plain}
\bibliography{aaabibfile}

\begin{thebibliography}{10}

\bibitem{alur}
Rajeev Alur, Thomas~A. Henzinger, and Orna Kupferman.
\newblock Alternating-time temporal logic.
\newblock {\em J. {ACM}}, 49(5):672--713, 2002.

\bibitem{obda}
Meghyn Bienvenu, Balder ten Cate, Carsten Lutz, and Frank Wolter.
\newblock Ontology-based data access: {A} study through disjunctive {D}atalog,
  {CSP}, and {MMSNP}.
\newblock {\em {ACM} Trans. Database Syst.}, 39(4):33:1--33:44, 2014.

\bibitem{borg}
Egon B{\"{o}}rger and Robert~F. St{\"{a}}rk.
\newblock {\em Abstract State Machines. {A} Method for High-Level System Design
  and Analysis}.
\newblock Springer, 2003.

\bibitem{aamas}
Valentin Goranko, Antti Kuusisto, and Raine R{\"{o}}nnholm.
\newblock Game-theoretic semantics for alternating-time temporal logic.
\newblock {\em {ACM} Trans. Comput. Log.}, 19(3):17:1--17:38, 2018.

\bibitem{hinti}
Jaakko Hintikka.
\newblock {\em Logic, Language-games and Information: Kantian Themes in the
  Philosophy of Logic}.
\newblock Clarendon Press, Oxford, 1973.

\bibitem{hodg}
Ian~M. Hodkinson and Mark Reynolds.
\newblock Temporal logic.
\newblock In Patrick Blackburn, J.~F. A.~K. van Benthem, and Frank Wolter,
  editors, {\em Handbook of Modal Logic}, pages 655--720. North-Holland, 2007.

\bibitem{tc}
Antti Kuusisto.
\newblock Some {T}uring-complete extensions of first-order logic.
\newblock In {\em Proceedings Fifth International Symposium on Games, Automata,
  Logics and Formal Verification, GandALF}, pages 4--17, 2014.

\bibitem{a15k}
Antti Kuusisto.
\newblock A double team semantics for generalized quantifiers.
\newblock {\em CoRR}, abs/arXiv:1310.3032v10, 2015.

\bibitem{a15k2}
Antti Kuusisto.
\newblock A double team semantics for generalized quantifiers.
\newblock {\em CoRR}, abs/arXiv:1310.3032v11, 2015.

\bibitem{okts}
Antti Kuusisto.
\newblock Team semantics and recursive enumerability.
\newblock In {\em Proceedings of Student Research Forum Papers and Posters at
  {SOFSEM} 2015, the 41st International Conference on Current Trends in Theory
  and Practice of Computer Science}, pages 132--139, 2015.

\bibitem{a17k2}
Antti Kuusisto.
\newblock First-order logic with incomplete information.
\newblock {\em CoRR}, abs/arXiv:1703.03391v2, 2017.

\bibitem{a17k}
Antti Kuusisto.
\newblock First-order logic with incomplete information.
\newblock {\em CoRR}, abs/arXiv:1703.03391, 2017.

\bibitem{a18k}
Antti Kuusisto.
\newblock First-order logic with incomplete information.
\newblock {\em CoRR}, abs/arXiv:1703.03391v8, 2018.

\bibitem{a19k}
Antti Kuusisto.
\newblock First-order logic with incomplete information.
\newblock {\em CoRR}, abs/arXiv:1703.03391v14, 2019.

\bibitem{lore}
Paul Lorenzen.
\newblock Ein dialogisches konstruktivit\"{a}tskriterium.
\newblock In Andrzej Mostowski, editor, {\em Proceedings of the Symposium on
  Foundations of Mathematics, Warsaw 1959}, pages 193--200. Panstwowe
  wydawnictwo naukowe, Warsaw, 1961.

\bibitem{wool}
Michael~J. Wooldridge.
\newblock {\em An Introduction to MultiAgent Systems, Second Edition}.
\newblock Wiley, 2009.

\end{thebibliography}

\end{document}